\documentclass[aps,pra,10pt,groupedaddress,superscriptaddress,tightenlines,twocolumn]{revtex4-2}

\usepackage{config}

\newif\ifshowcomments
\showcommentsfalse          

\ifshowcomments
  \newcommand{\hft}[1]{\textcolor{blue}{HFT: #1}}
  \newcommand{\bs}[1]{\textcolor{red}{BS: #1}}
  \newcommand{\ea}[1]{\textcolor{orange}{EA: #1}}
\else
  \newcommand{\hft}[1]{}
  \newcommand{\bs}[1]{}
  \newcommand{\ea}[1]{}
\fi

\begin{document}

\title{Efficient classical algorithm for estimating linear statistics of Boson Sampling}

\author{Benoit Seron}
\affiliation{International Iberian Nanotechnology Laboratory (INL),
Av. Mestre Jos\'e Veiga, 4715-330 Braga, Portugal}
\affiliation{Physikalisches Institut, Albert-Ludwigs-Universit\"at Freiburg,
Hermann-Herder-Stra{\ss}e 3, D-79104 Freiburg, Germany}
\affiliation{EUCOR Centre for Quantum Science and Quantum Computing,
Albert-Ludwigs-Universit\"at Freiburg, Hermann-Herder-Stra{\ss}e 3, D-79104 Freiburg, Germany}
\author{Hugo Thomas}
\affiliation{Laboratoire d’Informatique de Paris 6, CNRS, Sorbonne Université, Paris, France}
\affiliation{Quandela, 7 rue Léonard de Vinci, Massy, France}
\affiliation{DIENS, \'Ecole Normale Supérieure, PSL University, CNRS, INRIA, Paris, France}
\author{Eduardo Araujo}
\affiliation{International Iberian Nanotechnology Laboratory (INL),
Av. Mestre Jos\'e Veiga, 4715-330 Braga, Portugal}
\affiliation{Department of Engineering, University of Minho, R. da Universidade, 4710-057 Braga, Portugal}
\author{Alex Arkhipov}
\affiliation{Independent researcher}
\author{Changhun Oh}
\affiliation{Department of Physics, Korea Advanced Institute of Science and Technology, Daejeon 34141, Korea}
\author{Leonardo Novo}
\affiliation{International Iberian Nanotechnology Laboratory (INL),
Av. Mestre Jos\'e Veiga, 4715-330 Braga, Portugal}

\date{\today}

\begin{abstract}
    Boson Sampling is a prominent candidate for the demonstration of quantum computational advantage but it remains unclear whether a large-scale boson sampler can find useful computational applications. The challenge is that to use a boson sampler to estimate, for example, a physical observable, it is necessary to coarse grain the outcome distribution, owing to the exponential size of the outcome space and the anti-concentration properties of the Boson Sampling distribution. 
    In this work, we analyse the complexity of a specific type of coarse-graining of Boson Sampling distributions based on linear functions of the output photon occupation numbers, which we refer to as linear statistics. We present an efficient classical algorithm for approximating linear statistics of boson samplers within additive error for different kinds of input states, such as Fock states and squeezed states. This allows us to unify in the same framework recent results on efficient quantum-inspired classical algorithms for simulating molecular vibronic spectra, or efficient classical approximations of coarse-grained distributions based on detector binning. Additionally, we show how our algorithm can be used to classically evaluate a proposed one-way function based on Boson Sampling, while other proposals for cryptographic applications escape our classical simulation techniques. We leave open the question of classical simulability of non-linear statistics and connect it to the problem of computing transition amplitudes of linear-optical circuits with one layer of interactions.

\end{abstract}

\maketitle

\section{Introduction}

Photonic architectures are promising platforms for demonstrating quantum
advantage, with Boson Sampling \cite{aaronson_computational_2011} and Gaussian
Boson Sampling \cite{hamilton_gaussian_2017} as leading candidates \cite{hangleiter2023computational}.
Implementations of Boson Sampling have steadily improved in scale and
performance
\cite{broome_photonic_2013,tillmann_experimental_2013,loredo_boson_2017,he_timebinencoded_2017,wang_highefficiency_2017,wang_boson_2019a,zhong_quantum_2020,hoch_reconfigurable_2022,madsen_quantum_2022,maring_versatile_2024,young_atomic_2024,carosini_programmable_2024,liu_robust_2025}.
In parallel, significant theoretical progress has been made in understanding the
computational complexity of these systems, including refined hardness results
with minimal conjectures and extensions to experimentally relevant regimes
\cite{bouland_complexitytheoretic_2023,bouland_exponential_2025,ehrenberg_transition_2025,kolarovszki_general_2026,mhiri_boson_2026}.
Beyond complexity-theoretic improvements and experimental milestones, a growing
body of work has explored whether such models may also support more structured
computational tasks \cite{salavrakos_photonnative_2025}. Most prominent
examples of such tasks involve molecular spectra
\cite{huh_vibronic_2015,huh_yung_vibronic_2017}, graph problems
\cite{bradler2018gBS_perfect_matchings, arrazola_dense_2018, mezher_solving_2023}, variational or machine learning approaches \cite{agresti_demonstration_2025,yin_experimental_2025,monbroussou_classical_2026} or cryptographic applications
\cite{nikolopoulos_decision_2016,nikolopoulos_cryptographic_2019,shi_quantum_2022,singh_proofofwork_2025}.
Experimental demonstrations of solving these problems
\cite{deng_solving_2023,wang_experimental_2023,wang_efficient_2020,zhu_largescale_2024,eickmann_bridging_2026}
were motivated by their applications in a wide range of fields, with the aim of
achieving quantum computational advantage without requiring a fault-tolerant
quantum computer.

Although an outcome probability of Boson Sampling is $\#\mathsf{P}$-hard to
compute exactly in general because of its link to a matrix permanent or Hafnian (for Gaussian Boson Sampling)
\cite{aaronson_computational_2011, hamilton_gaussian_2017}, many of the potential applications of Boson Sampling require the estimation of specific  physical observables, which can only be efficiently approximated by the experiment up to an additive error. 
Since individual output probabilities are typically exponentially small, experimentally accessible quantities generally involve aggregating many outcomes.
In this context, the known computational hardness arguments do not apply and, in fact, efficient classical algorithms for estimating individual outcome 
probabilities, and more generally expectation values of some observables within an
inverse-polynomial additive error (in system size) exist
\cite{aaronson_generalizing_2014,lim_classical_2025,thomas_shedding_2025}.  Classical simulability results are even stronger when noise is taken into account, as efficient classical simulations of noisy Boson Sampling
become possible \cite{qi_regimes_2020,oh_classical_2021,oh_classical_2024}. As
it turns out, many of the proposed applications of Boson Sampling admit an
efficient classical simulation algorithm
\cite{oh_quantuminspired_2024,oh_quantuminspired_2024a,seron_efficient_2024}.
Nonetheless, cryptography-related applications, including one-way functions
\cite{nikolopoulos_cryptographic_2019} and related proposals for a quantum hash function
\cite{shi_quantum_2022} or quantum proof-of-work protocols \cite{singh_proofofwork_2025}, seem to escape these simulation methods.

A common ground between the proposals for using Boson Sampling for simulating molecular vibronic spectra and realizing 
cryptographic one-way functions is the need to coarse grain the output distribution. In this setting, outcomes are grouped together by a pre-defined
classical post-processing, and one is interested in the probabilities of each
grouped event. Notably, coarse-graining methods also play an important role in
validating the correct functioning of Boson Sampling devices
\cite{drummond2022simulating,seron_efficient_2024,bressanini_gaussian_2024,anguita_experimental_2025} by jointly counting the number of photons in groups of detectors.

In this work, we develop efficient classical algorithms for a specific type of coarse-graining strategy which we refer to as \emph{linear statistics} of the
boson-sampler outcomes. For a fixed weight vector $\bm \omega$ and denoting the output photon number distribution as $\s$, we use the term linear statistics to refer to the probability distribution of the values of $\bm \omega\cdot \s$.  We show that linear statistics, and related cumulative distributions, can be classically estimated efficiently to the same level of accuracy as running a
Boson Sampling experiment a polynomial number of times. While we focus on Boson Sampling with single-photon Fock input states, we argue that our
algorithm can be extended to other sampling schemes using different types of
input states or transformations, namely Gaussian Boson Sampling (GBS) \cite{hamilton_gaussian_2017} and linear interference of other types of product input states. This result allows us to unify in the same framework recent results on efficient quantum-inspired classical algorithms for simulating molecular vibronic spectra \cite{oh_quantuminspired_2024}, or efficient classical approximations of coarse-grained distributions based on detector binning \cite{seron_efficient_2024, anguita_experimental_2025}. In addition, we prove 
that the most probable bin problem defined in \cite{nikolopoulos_decision_2016} and used for the proposal of a one-way function in \cite{nikolopoulos_cryptographic_2019} can be evaluated classically
efficiently, while variants of this proposal seem to escape our simulability techniques \cite{shi_quantum_2022, singh_proofofwork_2025}.
As experimental imperfections are
inevitable, we also discuss how linear statistics can be simulated in the presence of partial photon
distinguishability and photon loss.

These results raise the question of finding other estimation problems that can be solved by efficient classical post-processing of boson
sampling outcomes but cannot be currently solved efficiently without access to quantum devices. As a first step in this direction, we propose a coarse-graining strategy based on a simple quadratic function of the outcome photon-number distribution and argue that simulating such non-linear statistics is tightly connected to open questions regarding simulability of linear optics with one layer of photon-photon interactions \cite{jabbour2025complexity, upreti_exponentiallyimproved_2026}.

\section{Background}

\begin{figure*}[t]
\begin{tikzpicture}[
myRectangle/.style 2 args = {
    draw, rounded corners, fill=quandelablue!40,
    inner sep=0pt, outer sep=0pt,
    fit=(#1) (#2)}, 
myPermutationRectangle/.style 2 args = {
    draw, rounded corners, fill=quandelablue!40,
    inner sep=0pt, outer sep=0pt,
    fit=(#1) (#2)}, 
myUnitary/.style 2 args = {
    draw, rounded corners, fill=blue2!40,
    inner sep=0pt, outer sep=0pt,
    fit=(#1) (#2)}, 
myPrepUnitary/.style 2 args = {
    draw, rounded corners, fill=fancygreen!40,
    inner sep=0pt, outer sep=0pt,
    fit=(#1) (#2)}, 
myFadedRectangle/.style 2 args = {
    fill=quandelablue!40,
    draw, rounded corners, path fading=custom fade out,
    inner sep=0pt, outer sep=10pt,
    fit=(#1) (#2)},  
myFadedUnitary/.style 2 args = {
    fill=quandelared!40,
    draw, rounded corners, path fading=custom fade out,
    inner sep=0pt, outer sep=10pt,
    fit=(#1) (#2)},
myFadedPrepUnitary/.style 2 args = {
    fill=fancygreen!40,
    draw, rounded corners, path fading=custom fade out,
    inner sep=0pt, outer sep=10pt,
    fit=(#1) (#2)},
myDetectorStyle/.style = {
    line width=.3pt, black, fill = blue3!40
},
myDectectorArrowStyle/.style = {
    myDetectorStyle, rounded corners=2pt, double distance=1pt, fill = white
},
optical splitter/.pic = {
  \draw[pic actions] (-.5,.25) coordinate (#1-in-1) to [out=0,in=180] (0,0)
    (-.5,-.25) coordinate (#1-in-2) to [out=0,in=180] (0,0)
    (0,0)   to [out=0,in=180] (.5, 0.25) coordinate (#1-out-1)
    (0,0)   to [out=0,in=180] (.5, -0.25) coordinate (#1-out-2);
    },
every node/.style=inner sep = 0,
]

\coordinate (bottom left Uprep) at (-6, 2.5+.1);
\coordinate (top right Uprep) at (-4, 4.5-.1);

\coordinate (bottom left Ui) at (-3.5, 2.5+.1);
\coordinate (top right Ui) at (-1.5, 4.5-.1);

\pgfmathsetmacro{\detecYrad}{.14} 
\pgfmathsetmacro{\detecXrad}{.25} 
\pgfmathsetmacro{\yshift}{1.5} 
\pgfmathsetmacro{\ysshift}{3} 
\pgfmathsetmacro{\dist}{.35} 

\coordinate (bottom left) at (0.5,0+.1);
\coordinate (top right) at (2.5, 1.5-.1);
\coordinate (bottom left shift) at (0.5,0+\yshift+.1);
\coordinate (top right shift) at (2.5, 1.5+\yshift-.1);
\coordinate (bottom left sshift) at (0.5,0+\ysshift+.1);
\coordinate (top right sshift) at (2.5, 1.5+\ysshift-.1);

\coordinate (bottom left mshift) at (0.5,0-\yshift+.1);
\coordinate (top right mshift) at (2.5, 1.5-\yshift-.1);

\coordinate (bottom left p) at (-1,-1.5+.1);
\coordinate (top right p) at (0.25, 1.5+\ysshift-.1);

\foreach \i/\j in {0/1,0.35/2, 1.5/m}
{    
    \draw[thin] (-4, 4.25-\i) -- (-1 , 4.25-\i);
    \draw[double=blue3!40, thin] (-1 + .2 ,4.25-\i) .. controls (-.87 + .3, 4.45-\i) and (-.75 + .1, 4.05-\i) .. (-.63 + .2 ,4.25-\i); 
    \draw[myDetectorStyle] (-1, 4.25 - \i + \detecYrad) arc(90:-90:\detecXrad cm and \detecYrad cm) -- (-1,4.25 - \i + \detecYrad);
    \node[] at (-4.5, 4.25-\i) {$\ket{\psi_{\j}}$};
}

\foreach \i in {-3.85, -.9}
    \node[] at (\i, 3.4) {$\vdots$};

\node[myUnitary={bottom left Ui}{top right Ui}] {};
        
\node[] at (-2.5, 4.25-.75) {\Large $U$};

\draw [decorate,decoration={brace,amplitude=5pt,raise=-1ex,mirror}]
  (-5.3, 4.25+.2) -- (-5.3, 2.75-.2) node[midway,xshift=-1em]{$\ket{\psi}$};

\node[align=center, draw, rounded corners = 2pt, minimum height = 1cm, minimum width = 2cm] (f) at (2.1, 3.5) {$f: \Phi_m^n \to \mathbb{N}$};
\draw[->] (-.22, 3.5) -- (f);

\draw[draw=black, fill=white, rounded corners = 2pt, fill=blue3!40] (-.5, 2.6) rectangle ++ (.45, 1.8);
\node[] (s) at (-.275, 3.5) {$\bm s$};

\node[] at (-2.5, 5.8) {1. Boson Sampling};
\node[] at (2, 5.8) {2. Statistics};
\node[] at (7, 5.8) {3. Grouped distribution};

\begin{scope}[shift={(4,-3)}]

\foreach \x/\col/\lab/\y in {
    0/box1/1/1.417881,
    1/box2/2/1.218620,
    3/box3/d-1/1.864111,
    4/box4/d/2.677960131813773
}{
    \draw[dashed, fill=white, rounded corners=2pt, fill=\col!30] 
        (1.0+\x, 4.95) rectangle ++ (1, \y);

    \node[] at (1.5+\x, 4.5+\y) {$B_{\lab}$};
}

\node[] (top) at (6.5, 5) {$\mathbb{N}$};

\node[
    label={[shift={(0.1,-.6)}]$0$},
    inner sep=0pt
] (bot) at (.9, 5) {};

\node[
    label={[shift={(0,-.6)}]$N$},
    inner sep=0pt
] at (6, 5) {};

\draw[->] (bot) -- (top);

\foreach \i/\x in {
    0/0.218943,
    1/0.238756,
    2/0.315023,
    3/0.579286,
    4/0.665873
}{
    \draw[fill=white, rounded corners=2pt, fill=box1] 
        (1.0+.2*\i, 5) rectangle ++ (.2, \x);     
}

\foreach \i/\x in {
    0/0.608256,
    1/0.5253,
    2/0.442812,
    3/0.1283,
    4/0.273953
}{
    \draw[fill=white, rounded corners=2pt, fill=box2] 
        (2.0+.2*\i, 5) rectangle ++ (.2, \x);     
}

\foreach \i/\x in {
    0/0.686612,
    1/0.214843,
    2/0.795344,
    3/0.654621,
    4/0.512692
}{
    \draw[fill=white, rounded corners=2pt, fill=box3] 
        (4.0+.2*\i, 5) rectangle ++ (.2, \x);     
}

\foreach \i/\x in {
    0/0.682377,
    1/0.720514,
    2/0.815195,
    3/0.62983,
    4/0.830044
}{
    \draw[fill=white, rounded corners=2pt, fill=box4] 
        (5.0+.2*\i, 5) rectangle ++ (.2, \x);     
}

\foreach \x in {0, 1, 2, 3, 4, 5}{
    \draw[black] (1+\x, 4.9) -- (1+\x, 5.1);
}

\draw[->] (1, 5) -- (1, 8);

\node[] at (1, 8.2) {$\text{Pr}[B_i]$};

\node[fill=white] at (3.5, 5) {$\cdots$};

\end{scope}

\draw [decorate,decoration={brace,amplitude=5pt,raise=-1ex}] (4.3, 1.5) -- (4.3, 5.5);
\draw[->] (f) -- (4.2, 3.5);

\end{tikzpicture}
    \caption{Informal illustration of post-processing of Boson Sampling outcome statistics. The product input state
    ${\ket{\psi} = \otimes_{i=1}^m \ket{\psi_i}}$ evolves in the linear optical network described by $U$. The
    measurement outcome $\s$ is associated with an integer $f(\s)$, which falls within some range of values defined by the intervals $B_i$. Repeating this
    process gives a histogram of estimates of grouped probabilities $\Pr[B_1], \cdots,
    \Pr[B_d]$, which are then used for various applications. We focus on the classical simulability of the distribution induced by a linear function $f(\s)$, which we refer to as linear statistics. }
    \label{fig:linearStatistics}
\end{figure*}

In this section, we consider only input Fock states of $n$ photons evolving through an $m$-mode linear optical
network made of phase shifters and beam splitters. Nevertheless our main results  hold for more general product states of the form $\ket{\psi} = \otimes_{i=1}^m \ket{\psi_i}$, which includes also Gaussian states and certain superpositions of Fock states (see \cref{sec:extensions}). Linear-optical evolutions are described
by a unitary matrix $U \in \Ub(m)$. An orthonormal basis of the carrier Hilbert
space of states with $n$ single photons is labeled by 
\begin{equation}\label{eq:phimn}
    \Phi_m^n = \left\{(s_1, \dots, s_m) \ | \ \textstyle\sum_i s_i = n, \, s_i\geq 0 \right\},
\end{equation}
which corresponds to all the ways $n$ particles can be spread across $m$ modes,
allowing multiple particles per mode. Direct counting yields that this Hilbert
space is of dimension $|\Phi_m^n| = \binom{n+m-1}{n}$. 
Given a reference input state, in general the \emph{collision-free} input state $\ket{1,\dots,1,0\dots0}$, we denote by $\Dc_U$ the probability distribution of measurement outcomes $\s \in \Phi_m^n$ induced by photon-number measurement after the reference input state evolved in the linear-optical interferometer described by $U$.

While sampling from the distribution induced by photon number measurements at the output of a linear optical network is believed to be hard classically,
it is possible to estimate each output probability to inverse-polynomial
additive precision \cite{aaronson_computational_2011}. More precisely, the best known classical algorithm for approximating transition amplitudes of Boson Sampling allows one to estimate $\mel{{\psi}}{\hat U}{{\phi}}$ to additive precision $\varepsilon$ with probability $1-\delta$ in time $O(m^2/\varepsilon^2 \log 1/\delta)$ \cite{lim_classical_2025}, where $U \in \mathbb U(m)$ and $\hat U$ is the corresponding linear-interferometer operator acting on the Fock basis.  The estimation works for arbitrary product input states $\ket{\psi}$ and $\ket{\phi}$ as long as each state appearing in the product has an efficiently computable  decomposition in the coherent state basis (see \cite{lim_classical_2025} for details).

The notion of \emph{linear statistics}  was initially introduced for verification of Boson Sampling \cite{arkhipov_2017} and is based on the analysis of the probability distribution of integer
weighted combinations of photon occupations, which feature marginal distributions and binned-mode distributions as special cases. It can be seen as a generic way of coarse-graining the Boson Sampling distribution via a linear function of the output, which also appears in different proposals for computational applications of Boson Sampling \cite{huh_vibronic_2015, nikolopoulos_cryptographic_2019}. 

\begin{definition}[Linear statistics of Boson Sampling]\label{def:linear-statistics}
    The \emph{linear statistic} with non-negative integer-valued weight-vector ${\bm \omega =
    (\omega_1, \cdots, \omega_m) \in \Nb^m_{\geq 0}}$ is defined as the probability distribution induced by
    $\Pr_{\s \sim \Dc_U}[\bm\omega \cdot \s = k]$, for $k \in \Nb$,
    where $\bm\omega \cdot \s = \omega_1 s_1 + \cdots + \omega_ms_m$.
\end{definition}
We discuss in \cref{sec:general-weight} how \cref{def:linear-statistics} can be extended to negative- and rational-valued weight-vectors, by mapping both cases to the non-negative integer-valued case.
For a fixed number of modes $m$ and photons $n$, a given weight-vector $\bm
\omega \in \mathbb{N}^m_{\geq 0}$ thus maps measurement outcomes to a set of integers $\{0, \dots, N-1\}$,
where $N$ depends on $n, m$ and $\bomega$. Moreover, we will allow the set $\{0,
\dots, N-1\}$ to be partitioned into $d>0$ disjoint subsets $B_1, \dots, B_d$,
such that 
\begin{equation}
    \bigcup_{j=1}^d B_j = \{0, \dots, N-1\}.
\end{equation}
The central object of our work is the probability of observing a group $B_j$
when sampling from $\Dc_U$, namely,
\begin{equation}\label{eq:groupPr}
    \Pr[B_j] = \sum_{\substack{\s \in \Phi_m^n \\ \bomega \cdot \s \in B_j}} \Pr_{\s \sim \Dc_U}[\s].
\end{equation} 
This concept is illustrated in \cref{fig:linearStatistics}. Different
problems related to those probabilities can be defined depending on the task.
For example, one could be interested in estimating each of these probabilities
to inverse-polynomial additive precision or even sampling from the induced distribution.  We remark that, even though individual outcome probabilities of boson samplers tend to be exponentially small, groups $B_j$ that comprise a sufficiently large proportion of the outcome space can have sufficiently large probabilities (e.g., larger than $1/\text{poly}(m)$) and so inverse-polynomial additive precision is enough for a meaningful estimation of their value. 

\section{Classical simulation of linear statistics}\label{sec:results}

Having access to samples from an ideal Boson Sampling device, it is possible to estimate the probabilities $\text{Pr}[B_j]$ 
up to additive error $\epsilon$, using $O(1/\epsilon^2)$ samples. In this section, we show that a similar estimation, i.e. up to polynomial factors in the running time, can be done via efficient classical algorithms.  
Hereafter, we express linear statistics via a parameter-dependent linear
function $f_{\bomega} : \Phi_m^n \to \mathbb N$, defined as $f_{\bm
\omega}(\bm s) = \bomega \cdot \bm s$.
Let $N$ be a strict upper-bound on the maximum value $f_{\bomega}$ can take, i.e. $f_{\bomega}(\s)<N$ for every $\s\in\Phi_m^n$. A simple choice for this upper bound is
\begin{equation}
N=1+n\max_i|\omega_i|.
\end{equation}
Without loss of generality, we take
$N$ to be even, increasing it by one if necessary and retaining the notation $N$. We will see $f_{\bomega}$ as inducing an order in the event space $\Phi_m^n$ defined as 
\begin{equation}\label{sec:order-f}
    \s \preccurlyeq \t \Leftrightarrow f_{\bomega}(\s) \leq f_{\bomega}(\t),
\end{equation}
for any two outcomes $\s, \t \in \Phi_m^n$.  In  what follows, we assume that the groups are
contiguous in $\{0, \cdots, N-1\}$, namely, that the $j$-th group is defined as 
\begin{equation}\label{eq:groupBoundary}
    B_j = \{\bm s \ |\  x_{j-1} + 1 \leq f_{\bomega}(\bm s) \leq x_j\},
\end{equation}
so that the whole partition is described by the tuple $(x_0, x_1, \dots,
x_{d})$. 
Clearly, an efficient classical algorithm giving an estimation of $\Pr[B_j]$ should not directly sum up each event probability that contributes to the sum in \cref{eq:groupPr}, as there may be exponentially many such events.
Instead, the key idea of our algorithm to circumvent the direct computation of every probability directly
is to consider the Fourier decomposition of the
group probabilities, similarly to the approaches of \cite{seron_efficient_2024, oh_quantuminspired_2024}.

We first relate the characteristic function of the linear statistics distribution to a transition amplitude of a linear-optical circuit. Recall that, for a classical random variable $X$, the characteristic function is defined as

\begin{equation}
    \chi(t) = \e{e^{\imath tX}}.
\end{equation}

\begin{restatable}[Characteristic function of linear statistics
    \cite{arkhipov_2017}]{lm}{bsgen}
\label{thm:generatingFunction}
    Let $U\in \mathbb{U}(m)$, let $\bomega\in\mathbb{N}^m$, and let
    $\t \in\Phi_m^n$ be the input configuration. For every
    $\theta\in\mathbb{R}$, define
    \begin{equation}
    \chi_{\bomega}(\theta)
    =
    \mathbb{E}_{\s \sim D_U}
    \left[
    e^{\imath\theta f_{\bomega}(\s)}
    \right].
    \end{equation}
    Then
    \begin{equation}\label{eq:chiTransitionAmplitude}
    \chi_{\bomega}(\theta)
    =
    \langle\t|
    \hat V_{\Omega,\theta}
    |\t\rangle,
    \end{equation}
    where $\Omega=\operatorname{diag}(\omega_1,\ldots,\omega_m)$
    and $\hat V_{\Omega,\theta}$ is the linear-interferometer
    operator associated with the unitary matrix
    \begin{equation}
    V_{\Omega,\theta}
    =
    U^\dagger e^{\imath\theta\Omega}U.
    \end{equation}
\end{restatable}

 A proof is given in Appendix \ref{sec:CFApprox}. Since we assume $f_{\bomega}(\bm s)$ is integer-valued and bounded by $N$, in the discrete Fourier estimators used below, we only need to evaluate the 
characteristic function at the discrete set of values
\begin{equation}
\theta_k=-\frac{2\pi k}{N},
\end{equation}
 for $k\in\{0,\ldots,N-1\}$, and use the shorthand
$\chi_{\bomega}(k)=\chi_{\bomega}(\theta_k)$. Combined with the transition-amplitude approximation algorithm of \cite{lim_classical_2025}, \cref{thm:generatingFunction} implies that $\chi_{\bomega}(k)$ can be estimated efficiently with a classical computer to inverse-polynomial additive error in polynomial time for a large class of product input states. The characteristic function also allows us to express an estimator for the cumulative distribution. For $x \in \{0,...,N-1\}$, define the cumulative distribution function of a linear statistic as
\begin{equation}\label{eq:SxZero}
    S(x) = \Pr_{\bm s \sim \Dc_U}\left[f_{\bomega}(\s) \leq x\right].
\end{equation}
Clearly, an estimator for the cumulative distribution can be used to estimate the probabilities $\Pr[B_j]$. The following Lemma provides such an estimator based on the characteristic function.

\begin{restatable}[Fourier estimator of the cumulative distribution]{lm}{mcfourier}
\label{thm:estimateS}
Let $\varepsilon>0$, $0<\delta <1$. For $x \in\{0, \ldots, N-1\}$, the cumulative probability $S(x) $ is given by 
    \begin{equation}\label{eq:s-exp-val}
        S(x) = \e[k\sim q]{\frac{\chi_{\bomega}(k)\, G_N(k;x)}{N q(k)}},
    \end{equation}
    where $q$ is any probability distribution with support on $\{0, \cdots,
    N-1\}$ and
    \begin{equation}
        G_N(k;x) = \begin{cases}
            x + 1, 
                & k = 0, \\[6pt]
            \displaystyle \frac{1 - \exp\!\left(\frac{2\pi i k}{N}(x+1)\right)} {1 - \exp\!\left(\frac{2\pi i k}{N}\right)},
                & k \neq 0.
        \end{cases}
    \end{equation}
    Thus, provided $q$ can be sampled from and 
    \begin{equation}\label{eq:boundedModulus}
        \left| \frac{\chi_{\bomega}(k)\, G_N(k;x)}{N q(k)} \right| = O(R),
    \end{equation}
    $S(x)$ can be estimated to additive error $\varepsilon$ with failure
    probability $\delta$ using $\poly{\varepsilon^{-1}, R,
    \log\delta^{-1}}$ independent samples from $q$.
\end{restatable}

A detailed proof is given in \cref{sec:EstimateS}.
\cref{thm:estimateS} shows that efficient estimation of $S(x)$ reduces to finding a distribution $q$ that we can sample from efficiently, and for which the range of the estimator remains polynomially bounded. We use the following importance sampling distribution.
\begin{restatable}[Importance sampling distribution]{lm}{importance}\label{lm:Q}
    Let $N\in \mathbb N$ be even, and define
     \begin{equation}\label{eq:tk}
    t(k) =\begin{cases}
        1, & \text{if } k = 0 \\
        \frac{1}{k}, & \text{if } 1 \leq k \leq N/2 - 1, \\
        \frac{1}{N-k}, & \text{if } N/2 \leq k \leq N - 1, \\
    \end{cases}
\end{equation}
    and let $\mathcal N = \sum_{k=0}^{N-1} t(k)$ be the normalization factor. Then $q(k) = t(k)/\mathcal N$
    is a valid probability mass function from which samples can be generated classically in 
    expected time $\poly{\log N}$.
    Moreover, the range of the estimator in \cref{eq:boundedModulus} satisfies
    \begin{equation}
        \left| \frac{\chi_{\bomega}(k)\, G_N(k;x)}{N q(k)} \right| = O(\log N),
    \end{equation}
    uniformly in $x$ and $k$.
\end{restatable}
The preceding lemmas imply our main result.
\begin{restatable}[Classical estimation of cumulative distribution of linear statistics]{thm}{owf}\label{cor:OWF} 
    Consider a Boson Sampling instance with $n$ photons and $m = \poly{n}$ modes, specified by a unitary matrix
    $U\in \mathbb{U}(m)$ and an input state configuration $\t \in \Phi_m^n$. For a given  weight
    vector $\bomega \in \mathbb N^m_{\geq 0}$, let $f_{\bomega}: \Phi_m^n \to \{0, \dots, N-1\}$ be the function defining a linear statistic. For every $x \in \{0, \ldots, N-1\}$  there exists a classical randomized algorithm that estimates the cumulative probability
    $S(x)=\Pr_{s\sim D_U}[f_{\bomega} (\s)\leq x]$ to additive error $\varepsilon$ with failure probability at most $\delta$, in expected time
        \begin{equation}\label{eq:TimeComplexity}
            \poly{m, \log N, \varepsilon^{-1}, \log(1/\delta)}, 
        \end{equation}
        with $0 < \varepsilon, \delta < 1$.
        In particular, for any family of instances satisfying $\log N = \poly{m}$, the algorithm runs in expected time polynomial in $m$, $\varepsilon^{-1}$, and $\log(1/\delta)$. 
\end{restatable}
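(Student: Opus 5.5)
The proof combines the three preceding lemmas with the amplitude-estimation algorithm of \cite{lim_classical_2025}. By \cref{thm:estimateS}, $S(x)=\mathbb{E}_{k\sim q}[Z_k]$ with
\[
Z_k=\chi_{\bomega}(k)G_N(k;x)/(Nq(k)).
\]
We take $q$ to be the importance sampling distribution of \cref{lm:Q}. This $q$ can be sampled in expected time $\poly{\log N}$, and it gives $|Z_k|\le C\log N$ uniformly in $k$ and $x$. The only obstruction to running the Monte Carlo estimator of \cref{thm:estimateS} directly is that $\chi_{\bomega}(k)$ is not known exactly. We therefore replace it with an estimate obtained from \cref{thm:generatingFunction}: $\chi_{\bomega}(k)=\langle \t|\hat V_{\Omega,\theta_k}|\t\rangle$ with $V_{\Omega,\theta_k}=U^\dagger e^{\imath\theta_k\Omega}U$, and this matrix can be computed in $O(m^3)$ time. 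The algorithm of \cite{lim_classical_2025} then returns an estimate $\tilde\chi(k)$ with $|\tilde\chi(k)-\chi_{\bomega}(k)|\le\eta$ with probability $1-\delta'$, in time $O(m^2\eta^{-2}\log(1/\delta'))$.

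\textbf{Algorithm.} Draw $L$ i.i.d.\ samples $k_1,\dots,k_L\sim q$. For each, compute $\tilde\chi(k_\ell)$ and form $\tilde Z_\ell=\tilde\chi(k_\ell)G_N(k_\ell;x)/(Nq(k_\ell))$. Output $\mathrm{Re}\,\frac1L\sum_\ell\tilde Z_\ell$. Taking the real part is harmless because $S(x)$ is real, and it keeps the estimator bounded. We may also clip $\tilde\chi$ to the unit disk, since $|\chi_{\bomega}|\le1$; clipping only decreases the error.

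\textbf{Error split.} The total error is at most the sampling error plus the bias from the amplitude estimates:
\[
\Bigl|\tfrac1L\sum_\ell\tilde Z_\ell-S(x)\Bigr|\le\Bigl|\tfrac1L\sum_\ell Z_{k_\ell}-S(x)\Bigr|+\tfrac1L\sum_\ell|\tilde Z_\ell-Z_{k_\ell}|.
\]
\emph{Sampling term.} The $Z_{k_\ell}$ are bounded by $C\log N$, so Hoeffding's inequality applied to the real and imaginary parts bounds the first term by $\varepsilon/2$ with probability $1-\delta/2$, provided $L=O(\log^2N\,\varepsilon^{-2}\log(1/\delta))$.
\emph{Bias term.} This is the step that needs care. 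One cannot argue that $|\tilde Z-Z|$ is small merely because the amplitude error is small; the amplitude error gets multiplied by the weight $|G_N(k;x)|/(Nq(k))$. The proof of \cref{lm:Q} shows this weight is itself bounded by $O(\log N)$, since $|G_N|\lesssim N/\min(k,N-k)$ and $Nq(k)=Nt(k)/\mathcal N$ with $\mathcal N=O(\log N)$. Hence $|\tilde Z_\ell-Z_{k_\ell}|\le C\log N\cdot\eta$. Choosing $\eta=\varepsilon/(2C\log N)$ makes the second term at most $\varepsilon/2$. We also take $\delta'=\delta/(2L)$ and apply a union bound over all $L$ amplitude estimates.

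\textbf{Running time.} The cost is $L$ samples from $q$, each costing $\poly{\log N}$ in expectation, plus $L$ matrix computations, each $O(m^3)$, plus $L$ amplitude estimates, each $O(m^2\log^2N\,\varepsilon^{-2}\log(L/\delta))$. Evaluating $G_N$ takes $\poly{\log N}$ bit operations. The total is $\poly{m,\log N,\varepsilon^{-1},\log(1/\delta)}$ in expectation, and the final claim follows immediately when $\log N=\poly{m}$.

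The main obstacle is precisely the $\log N$ bound on the multiplicative weight, which controls how the amplitude error propagates into the estimator. I expect this bound to follow from the proof of \cref{lm:Q} by replacing $|\chi_{\bomega}|\le1$ with the error bound $\eta$. For input states beyond Fock states, correctness additionally requires that the hypotheses of \cite{lim_classical_2025} hold, namely that the input is a product state with an efficiently computable coherent-state decomposition.
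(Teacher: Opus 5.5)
Your proposal is correct and follows essentially the same route as the paper's proof. Both use Hoeffding over $O(\varepsilon^{-2}\log^2 N\log(1/\delta))$ samples from the importance distribution of \cref{lm:Q}, propagate the per-sample amplitude error through the weight bound $|G_N(k;x)/(Nq(k))|=O(\log N)$ with the choice $\varepsilon_\chi=\varepsilon/(2C\log N)$, and take a union bound over the calls to the algorithm of \cite{lim_classical_2025}. Your extra remarks (taking the real part, clipping to the unit disk, the $O(m^3)$ cost of forming $V_{\Omega,\theta_k}$) are harmless refinements that the paper either mentions in passing or absorbs into the polynomial bound.
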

A detailed proof, including the error arising from the additive approximation of $\chi_{\bomega}(k)$, is given in \cref{sec:thm1}. \cref{cor:OWF} provides a general way to efficiently simulate linear statistics of boson samplers. The probabilities of observing outcomes in each group $B_j$ in \eqref{eq:groupBoundary} can be directly expressed as the difference between two cumulative distribution values
\begin{equation}\label{eq:prcontbins}
    \Pr\big[B_j\big] = S(x_j)-S(x_{j-1}).
\end{equation}
Estimating each of $S(x_j)$ and $S(x_{j-1})$ to additive error $\varepsilon/2$ yields an $\varepsilon$-additive estimate of $\Pr\big[B_j\big]$. The importance sampling distribution plays a crucial role, as it ensures the classical simulation algorithm remains efficient even in scenarios where the range of the function $f_{\bomega} (\s)$ is exponentially large as considered, for example,  in Refs.~\cite{nikolopoulos_decision_2016}.  We discuss applications of these techniques in the following section.

\section{Applications}

\begin{figure*}[t]
    \includegraphics[width = \textwidth]{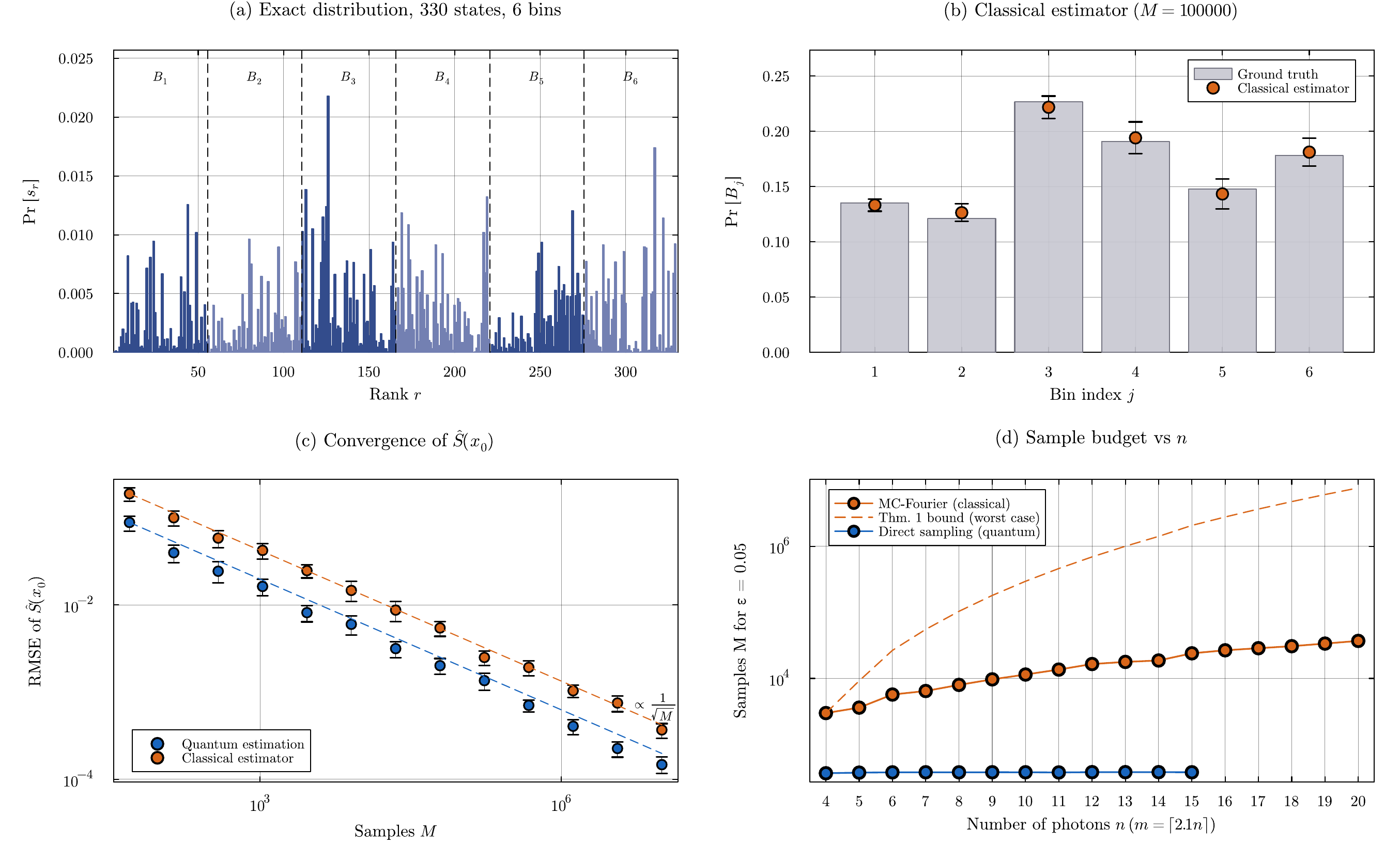}
    \caption{\textbf{Classical estimation of the boson-sampling output distribution via the
 algorithm presented in this work.} The input state $\ket*{1^n 0^{m-n}}$ evolves through an 
  interferometer defined by a Haar-random $U$, and outputs are encoded by $f_{\bomega}(s)=\sum_j s_j (n{+}1)^{j-1}$.
  \textbf{(a) Exact distribution and group partition.} Exact output probabilities for $n=4$, $m=8$
  ($330$ output configurations), ordered by $f_{\bomega}$ (see \cref{sec:order-f}) and partitioned
  into $d=6$ bins $B_1,\dots,B_6$ of equal size (represented with dashed lines).
  \textbf{(b) Classical group-probability estimates.} Group probabilities $\Pr[B_j]$ estimated via our algorithm
  (orange, $M=100{,}000$ samples per group, with errors obtained through batch means estimation with $K=40$ batches of $2\ 500$; error bars are
  $95\%$ confidence intervals) against the exact values
  (grey). The estimator correctly identifies the most-probable bin $B_3$.
  \textbf{(c) Convergence with sample size.} 
  Error from our estimation algorithm for $\hat S(x)$ to the exact value $S(x)=0.483$ found by direct summation versus the number of samples $M$, with $x$ at the boundary
  above group $3$. Both estimators decay as $O(M^{-1/2})$
  (dashed guides); the constant vertical offset is the $n^2\log^2 N$ variance
  penalty of our method relative to direct (quantum) sampling.
  \textbf{(d) Sample budget versus photon number.} Sample budget $M$ to reach additive error $\varepsilon=0.05$
  (95\% confidence interval) on a single bipartition of $\Phi_m^n$ in sets of equal size, versus photon number
  $n$ and $m=n^2$ modes. Direct quantum sampling (realised by the Clifford \& Clifford algorithm \cite{clifford_faster_2024}) is nearly constant in $n$
  whereas the sample needed for our method grows polynomially, significantly below the worst case bound $\propto n^2\log^2 N$ (dashed curve), with
  $\log N = n^2\log(n{+}1)$. 
    }
    \label{fig:owf}
\end{figure*}

In this section, we discuss the classical simulability of different applications of Boson Sampling relying on coarse-graining of the output distribution based on linear statistics. In \cref{sec:owf}, we apply our general result stated in \cref{cor:OWF} to provide the first efficient classical technique for simulating the most probable bin problem defined in \cite{nikolopoulos_decision_2016} and used in the one-way function proposal based on Boson Sampling from \cite{nikolopoulos_cryptographic_2019}.  Then in \cref{sec:mode-bins} we consider the related but distinct task of estimating a single probability of a linear statistic
and show how our result improves the existing classical method for estimating mode-grouped probabilities \cite{seron_efficient_2024}. We then argue that our
simulability results can be applied to boson sampling with different types of input states based on recent methods for estimating transition amplitudes in linear optics with product input states \cite{lim_classical_2025} (see \cref{sec:extensions} for details). This implies that our classical simulability results can be used to unify in the same framework of linear statistics results regarding simulability of binning strategies previously considered for
GBS, with applications in vibronic spectroscopy and validation of quantum devices
\cite{huh_yung_vibronic_2017,bressanini_gaussian_2024,oh_quantuminspired_2024}.
Additionally, in \cref{sec:extensions} we also explain how to take into account experimental imperfections such as partial photonic distinguishability and losses  at the level of the characteristic
function, without increasing the complexity of the classical simulation algorithm. We leave open the question of faster simulations in noisy scenarios.

\subsection{One-way function based on linear statistics}
\label{sec:owf}
A potential promising application of using a Boson Sampling device to solve a specific decision or function problem pertains to the design of
cryptographic one-way functions (OWFs)
\cite{nikolopoulos_decision_2016,nikolopoulos_cryptographic_2019} and so it is crucial to understand what kind of post-processing functions lead to probability distributions that can be simulated classically. An OWF $\Fc$
is a function that is easy to evaluate, but hard to invert on average. The core of the OWF proposed in
\cite{nikolopoulos_cryptographic_2019} is the estimation of the
largest group probability (often referred to as the \emph{most-probable bin} \cite{nikolopoulos_decision_2016}) according to \cref{eq:groupPr}, for a sequence of input states. We focus here on the classical simulability of this part of the protocol. The full protocol also involves further classical post-processing once the grouped probabilities are obtained, but it is not relevant to our discussion. 

Precisely, the quantum part of the protocol consists in linking an input state description $\bm t^* \in \Phi_{m}^n$ to the most probable bin $B^*$, amongst a partition $B_1, \cdots, B_d$ of the set of measurement outcomes $\Phi_m^n$. The partition defined in \cite{nikolopoulos_decision_2016} and used in the context of the OWF implementation \cite{nikolopoulos_cryptographic_2019, wang_experimental_2023} is based on the lexicographic ordering of the event space. The most natural way to induce this ordering is via the linear function 
\begin{equation}\label{eq:OWF}
    \Fc(\bm s) = f_{\bomega^{(\textsc{owf})}}(\bm s),
\end{equation}
with weights
\begin{equation}
    \bomega^{(\textsc{owf})} = (1, n+1, (n+1)^2, \dots, (n+1)^{m-1}),
\end{equation}
which is the $n$-ary number based on the digit representation of $\bm s = (s_1,\dots,s_m)$. In general, any weight vector with $\omega_{i+1}>n\omega_i$ orders the event space $\Phi_m^n$ in the same way, so the definition of $f_{\bomega^{(\textsc{owf})}}$ is not unique. To obtain the bins, outcomes $\s$ are separated in contiguous intervals with a similar number of outcomes. For example, $B_1$ contains the first $|\Phi_{m}^n|/d$ ranking outcomes and $B_2$ the next $|\Phi_{m}^n|/d$. When $d$ does not divide $|\Phi_m^n|$, the remaining are distributed among the first bins, as in \cite{nikolopoulos_decision_2016}, so that the bin size differs at most by one. 

An experiment may estimate the probability $\Pr[B_i]$ of each set of outcomes $B_i$, given an input $\bm t \in \Phi_m^n $ through direct sampling. 
The most probable bin is therefore estimated as 
\begin{equation}
    \widehat{\text{MPB}}(\bm t) = \arg\max_i \widehat{\Pr}[B_i]
\end{equation}
where $\widehat{\Pr}[B_i]$ is the estimate for $\Pr[B_i]$. This particular binning strategy allows us to use our method for evaluating the cumulative distribution of linear statistics to estimate each probability.  Hence, the most probable bin can be estimated with a number of samples only up to a $\poly{\log N}$ factor with respect to direct sampling.

Given that an experiment is limited to $\poly{m}$ samples, the number of bins $d$ must itself be kept to $d = \poly{m}$ to obtain meaningful estimates $\widehat{\Pr}[B_i]$. The groups $B_1, \dots, B_d$
are thus associated with a list of integers $(0, x_1, \dots, x_{d-1},
N^{\textsc{(owf)}})$ describing the \emph{boundaries} of the groups in
accordance with \cref{eq:groupBoundary}, where we set 
\begin{equation}
    N^{(\textsc{owf})} = n(n+1)^{m-1} + 2,
\end{equation}
so that $f_{\bomega^{(\textsc{owf})}} (\s) < N^{(\textsc{owf})}$. Since $n \leq m$, we have that
\begin{equation}
    \log N^{(\textsc{owf})} = O(m\log (n+1)) = \poly{m}.
\end{equation}
Therefore, by \cref{cor:OWF} we can estimate each bin probability classically in expected polynomial time, with the same level of precision as from polynomially many samples obtained through an actual experiment. We illustrate our technique with numerical simulations in \cref{fig:owf}. We stress that, if we take into account the effect of loss in the experimental hardware, the  time required to obtain
  samples from the quantum device actually grows exponentially with the number of photons, since the probability that all photons arrive to the output decreases exponentially.

While our method requires asymptotically more samples than an ideal experiment does, we stress that the running time for classical estimation for finite-size experiments can be significantly shorter. For instance, we are able to estimate the cumulative distribution $S(x)$ for $m = n = 100$ within a $0.01$ additive-error in about five seconds on average on a standard laptop \cite{seron2024bosonsampling}. Indeed, this number of photons is largely above the experimental capabilities of current hardware due to loss. 

It is important to remark that, in the context of cryptographic applications,  it is relatively simple to come up with strategies that circumvent our classical simulability results by considering other kinds of coarse-graining functions that do not directly map to linear statistics. One proposal for a hash function from Ref.~\cite{shi_quantum_2022} treats differently collision-free events from events with collision, which amounts to a non-linear postprocessing of the outcomes. Moreover, in the quantum proof-of-work protocol proposed in \cite{singh_proofofwork_2025}, one step of the protocol involves sorting the Boson Sampling outcomes according to its $n$-ary decomposition, applying a permutation to shuffle the ordering, and then binning the probability distribution. After the permutation, the new ordering obtained for the outcomes is not, in general, given by a linear function of the entries of $\s$. We leave open the question of whether these binning strategies are prone to efficient classical simulation methods.  

\subsection{Improved-time complexity for mode-binning}
\label{sec:mode-bins}

Grouped probabilities are also at the core of techniques of validation of Boson
Sampling \cite{seron_efficient_2024} and Gaussian Boson Sampling
\cite{bressanini_gaussian_2024}. In these techniques, the $m$ modes are
partitioned into $K$ groups $\Kc_1, \ldots, \Kc_K$, and one is interested in the
number of photons in each group of modes, rather than in each mode. For the case of Boson Sampling, an
algorithm is given in \cite{seron_efficient_2024} for estimating the probability
of a group occupation $(k_1, \ldots, k_K)$, i.e., with $k_i$ photons in the
group $\Kc_i$, in time $n^{O(K)}$. That is, the algorithm is efficient when
$K=O(1)$. After describing how this particular grouping strategy can be
expressed via linear statistics, we show how the group probabilities can be
estimated efficiently, even when the number of groups is as large as the number of output modes. We consider the weight vector
$\bomega^{\textsc{(mg)}}$ corresponding to mode-grouping 
\begin{equation}
    \label{eq:omega_spatial_binning}
    \hspace{-.8em}\bomega^{\textsc{(mg)}}\! =\! (\underbrace{1, \cdots, 1}_{|\Kc_1|\ \text{times}}, 
    \dots
    , \underbrace{(n+1)^{\textsc{k}-1}, \cdots, (n+1)^{\textsc{k}-1}}_{|\Kc_K|\ \text{times}}
    ),
\end{equation}
where the entry $(n+1)^i$ is repeated $|\Kc_{i+1}|$ times. Thus, observe that
\begin{align}
    f_{\bomega^{\textsc{(mg)}}}(\s) 
         = \bomega^{\textsc{(mg)}} \cdot \s 
         = \sum_{i = 1}^K\ (n+1)^{i-1} k_i,
\end{align}
where $k_i = \sum_{j \in \Kc_i} s_j$, corresponds to the total number of photons
in the $i$-th group. Indeed, this choice of weight-vector collects the number of
photons in each group of modes, so that all measurement outcomes corresponding
to the same group occupation are associated with the same integer. For a
mode-group occupation $\bm\ell \in \Phi_K^n$, indicating that $\ell_i$ photons
occupy the $i$-th group, we denote by $\mathscr{O}_{\bm\ell}\subseteq \Phi_m^n$
the set of photon occupations whose mode-grouping is $\bm \ell$, that is,
\begin{equation}
   \! \mathscr{O}_{\bm\ell} = \left\{\s \ | \ \s \in \Phi_m^n,\, \textstyle \sum_{j \in \Kc_i} s_j = \ell_i, \forall\ 1 \leq i \leq K\right\}.
\end{equation}

We obtain the following \cref{cor:modeBinning}.

\begin{restatable}{cor}{modebins}
\label{cor:modeBinning}
    Let $\kappa =
    \sum_{i=1}^{\textsc{k}} (n+1)^{i-1} k_i$ with ${k_1, \cdots, k_K \in \Nb}$.
    The probability of observing the group occupation ${\bm k = (k_1, \cdots,
    k_\textsc{k})}$, i.e., $k_i$ photons in $\Kc_i$ can be estimated classically
    to additive error $\epsilon>0$ via the linear statistics 
    \begin{equation}
        \Pr_{\s \sim \Dc_U}[f_{\bomega^{(\textsc{mg})}}(\s) = \kappa] = \sum_{\bm s \in \mathscr{O}_{\bm k}} \Pr[\s],
    \end{equation}
    with weight vector $\bomega^{(\textsc{mg})}$ and $q$ the uniform
    distribution, in time $O(m^2/\epsilon^4)$ with high probability.
\end{restatable}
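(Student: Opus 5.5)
The plan is to use point Fourier inversion instead of the cumulative estimator of \cref{thm:estimateS}. Because the target event is a single value $\kappa$, the estimator needs no $G_N$ factor, and uniform sampling of frequencies already gives an $O(1)$-bounded estimator. By \cref{thm:generatingFunction}, with $\theta_k=-2\pi k/N$ and $N$ any strict upper bound on $f_{\bomega^{(\textsc{mg})}}$, e.g.\ $N=O((n+1)^K)$, orthogonality of the characters on $\mathbb Z_N$ gives
\begin{equation*}
\Pr_{\s\sim\Dc_U}[f_{\bomega^{(\textsc{mg})}}(\s)=\kappa]=\frac1N\sum_{k=0}^{N-1}\chi_{\bomega^{(\textsc{mg})}}(k)\,e^{2\pi\imath k\kappa/N}=\mathbb E_{k\sim \mathrm{Unif}}\!\left[\chi_{\bomega^{(\textsc{mg})}}(k)\,e^{2\pi\imath k\kappa/N}\right].
\end{equation*}
The right-hand equality in the statement is immediate: by construction of $\bomega^{(\textsc{mg})}$, the base-$(n+1)$ digits of $f_{\bomega^{(\textsc{mg})}}(\s)$ are exactly the group occupations $(k_i)_i$, since each $k_i\le n$. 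Hence $f_{\bomega^{(\textsc{mg})}}(\s)=\kappa$ if and only if $\s\in\mathscr O_{\bm k}$. The random variable inside the expectation has modulus at most $1$, because $|\chi|\le 1$. This is the case $q$ uniform, $R=O(1)$ of the estimation framework.

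The algorithm proceeds as follows.
\begin{enumerate}
\item Draw $M=O(\epsilon^{-2})$ independent uniform indices $k_1,\dots,k_M\in\{0,\dots,N-1\}$. Each draw costs $O(\log N)=O(K\log n)$ random bits, which is $O(m\log n)$ and negligible.
\item For each $k_j$, estimate $\chi_{\bomega^{(\textsc{mg})}}(k_j)=\langle \t|\hat V_{\Omega,\theta_{k_j}}|\t\rangle$ to additive error $\epsilon/2$ using the transition-amplitude algorithm of \cite{lim_classical_2025}. This costs $O(m^2/\epsilon^2)$ per index.
\item Output the real part of the empirical mean of $\widehat\chi(k_j)\,e^{2\pi\imath k_j\kappa/N}$.
\end{enumerate}
Hoeffding's inequality on the bounded variables controls the sampling error by $\epsilon/2$ with constant failure probability. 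The per-term approximation errors add at most $\epsilon/2$ deterministically once each $\widehat\chi(k_j)$ is within $\epsilon/2$. The total cost is $M\cdot O(m^2/\epsilon^2)=O(m^2/\epsilon^4)$, as claimed.

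Two technical points need care. First, the phases $e^{\imath\theta_k\omega_j}$ involve the exponentially large weights $(n+1)^{i-1}$. I would reduce $k\,\omega_j \bmod N$ in exact integer arithmetic, which takes $\poly{K\log n}$ time per entry. Second, the cost per call must be kept at $O(m^2)$ rather than $O(m^3)$: one should not form $V=U^\dagger e^{\imath\theta\Omega}U$ explicitly. Instead I would apply $U$, the diagonal phase, and $U^\dagger$ successively to the vectors that the coherent-state (Gurvits-type) estimator of \cite{lim_classical_2025} requires, each step being a matrix--vector product.

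The main obstacle I expect is the failure-probability bookkeeping. Requiring every one of the $M$ estimates $\widehat\chi(k_j)$ to be accurate simultaneously would, via a union bound, cost an extra $\log M$ factor. I would avoid this by noting that the estimator of \cite{lim_classical_2025} is itself an empirical mean of bounded unbiased samples. The error of the combined average is then controlled by a single Hoeffding bound on the nested average; equivalently, one can require only that the average of the per-term errors be at most $\epsilon/2$. This yields the stated $O(m^2/\epsilon^4)$ bound with high (constant) probability, which can be boosted by median-of-means.
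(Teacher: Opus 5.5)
Your proposal is correct and is essentially the paper's proof: uniform $q$ in the pointwise inverse DFT, an estimator of modulus at most one, Lim--Oh estimates of each $\chi_{\bomega^{(\textsc{mg})}}(k_j)$ to accuracy $\Theta(\epsilon)$ at cost $O(m^2/\epsilon^2)$, and $O(\epsilon^{-2})$ outer samples, giving $O(m^2/\epsilon^4)$. Your additional care is more rigorous than the paper, which simply asserts a negligible failure probability without any union bound; this covers the exact reduction of $k\omega_j \bmod N$ and the single Hoeffding bound on the nested average. Pushed one step further, that nested argument makes the per-term accuracy requirement unnecessary altogether, consistent with the paper's own remark that a joint estimator could improve the $\epsilon$-dependence.
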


See \cref{sec:proofModes} for more details. When $K = m$, so that group
probabilities correspond to actual outcome probabilities, our algorithm recovers
the time complexity of the algorithm of \cite{lim_classical_2025}. Importantly, while previous work focused on obtaining the full distribution, thus leading to an exponential dependency in the number of groups \cite{seron_efficient_2024}, \cref{cor:modeBinning} gives a classical algorithm
that lifts this dependency as we are only interested in estimating a  single probability of a given mode-group occupation. Moreover, special choices of
$\bomega^{(\textsc{mg})}$ thus allow one to estimate any marginal of $\Dc_U$.

We remark that while this section focuses on recovering mode-binned distributions of \cite{seron_efficient_2024}, other choices of the weight vector $\bomega$ which do not necessarily map directly to spatial mode binning could also be used in the context of validation of boson samplers. It remains to be explored whether more general tests based on linear statistics could provide more stringent validation techniques for Boson Sampling beyond simple mode-binning. 

\bs{I think that this statement, though true, is slightly misleading. The exponential
dependency in K was due to the fact that we compute the full probability distribution 
instead of accessing a single element (like you do here). It's not obvious whether we could
have cherry picked a single element: there, you clearly bring something new.
I don't really care about being precise here and you could leave it like that if it helps
the paper's narrative}\hft{I think we could remove the dependency on K in you paper by using Monte Carlo on eq.26 on your paper, published version. It's just that it is not mentionned. }
\ea{In the abstract I wrote we have an improved time algorithm to compute single group-mode probabilities, so this discussion will have repercussions there. Personally, I think we should specify that the removal of the dependency is in the estimation of single group-mode probabilities if indeed in Benoits paper they depend exponentially on K}\hft{I'm not sure we should say that in the abstract as it's very technical. We should discuss about that together.}

\subsection{Extension to other sampling schemes and partially distinguishable photons}
\label{sec:extensions}

In this section, we highlight the role of the characteristic function as the
only physics-dependent quantity. That is to say, the dependence on the input
state and the transformation is only visible in the characteristic function.
Therefore, our \cref{cor:OWF} can be adapted to any experimental setup
giving rise to a characteristic function that can be either computed exactly or
approximated within small error efficiently classically. We detail two
different extensions.

\paragraph{Gaussian Boson Sampling.}

Given a unitary matrix $U \in \Ub(m)$, a weight vector $\bomega \in \Nb^m$ and
an input displaced squeezed vacuum of the form $\ket{\psi_{in}} = \hat
D(\bm{\alpha})\hat S(\bm r)\ket{\bm{0}}$. Using positive P-representation \cite{oh_quantuminspired_2024,bressanini_gaussian_2024}, the characteristic function of linear statistics of the distribution $\Dc_U$
reads

\begin{equation}
    \chi^{\textsc{(gbs)}}_{\bomega}(k)
        = \mathcal N \sqrt{\frac{(2\pi)^{2m}}{|Q|}}\exp(\frac12 \bm \zeta^T Q^{-1}\bm \zeta + \bm\zeta_0),
\end{equation}
where 
\begin{equation}
    Q = \begin{bmatrix}
        2 \Gamma^{-1}+ \mathbb I_m & -U^T \exp(\imath \pi k \Omega) U^* \\
        -U^\dag \exp(\imath \pi k \Omega) U & 2 \Gamma^{-1}+ \mathbb I_m
    \end{bmatrix},
\end{equation}
and 
\begin{equation}
\begin{aligned}
    &\gamma_i = \exp(2r_i) - 1, \quad 
    \mathcal N = \prod_{i=1}^m\frac{\sqrt{1+\gamma_i}}{\pi \gamma_i},\\
    &\Gamma = \text{diag}(\{\gamma_i\}_{i=1}^m), \quad
    \Phi = \exp(\imath \pi k \Omega) - \mathbb{I}, \\
    &\bm \zeta = \frac{1}{\sqrt2}\begin{bmatrix}
        U^T\Phi \bm \delta^* \\ U^\dag \Phi \bm \delta
    \end{bmatrix}, \quad
      \bm\zeta_0 = \frac12 \bm \delta^T \Phi \bm \delta^*.
\end{aligned}
\end{equation}
That is, the characteristic function of displaced
squeezed vacuum can be computed exactly in polynomial time, and a similar result
holds for a thermal input state \cite{bressanini_gaussian_2024}. The fact that the characteristic function for displaced squeezed input states can be computed exactly classically is at the heart of the quantum-inspired classical algorithm for approximating molecular vibronic spectra \cite{oh_quantuminspired_2024}, which is given by linear statistics of Boson Sampling with this type of inputs \cite{huh_vibronic_2015}. 

Moreover, by using this characteristic function it is also possible to use the results of Sec.~\ref{sec:mode-bins} for computations of mode-binned distributions for Gaussian Boson Sampling, from which marginal distributions and photon-number correlations can be obtained, which are important for device validation
\cite{fitzke_simulating_2023,cardin_photonnumber_2024, drummond2022simulating,bressanini_gaussian_2024}.

On the other hand, the characteristic function of linear statistics of GBS with squeezed Fock input states
is now related to a matrix Hafnian
\cite{hamilton_gaussian_2017}, for which an efficient classical estimation
algorithm is known \cite{lim_classical_2025}.
Therefore, \cref{cor:OWF} also extends  naturally to these settings. 
An important consideration that needs to be taken into account is a photon number truncation, which is possible for the types of states discussed before (see \cite{bressanini_gaussian_2024,upreti_exponentiallyimproved_2026} for discussions). This truncation will affect the upper bound on the range of the function $f(\s)$ and thus on the number of Fourier coefficients appearing in the estimator of $S(x) $ in \cref{eq:s-exp-val}. However, since the error induced by the truncation typically decreases exponentially with $N$, this does not significantly affect the performance of the algorithm.

\paragraph{Experimental errors.}
The two main sources of errors in photonic experiments are losses and partial distinguishability of the input photons. In the case of Fock Boson Sampling, outcome probabilities are generally post-selected on finding all photons at the output, but events are still affected by partial distinguishability, which may render the outcome probabilities
easier to compute. For pure states, this effect can be described by the so-called Gram matrix ${(S)_{ij} =
\braket{\phi_i}{\phi_j}}$ of overlaps of internal states of photons \cite{seron_efficient_2024}. Recall from
\cref{thm:generatingFunction} that the characteristic function can be written as
the normalized matrix permanent of $V_{\bm s}= (U^\dagger \exp(-\tfrac{2\pi \imath k}{N}
\Omega) U)_{\s,\s}$ with $\Omega=\text{diag}(\omega_1, \cdots, \omega_m)$ and
$\s$ the input occupation. It was shown \cite{seron_efficient_2024} that the
characteristic function associated with partially distinguishable photons reads
\begin{equation}\label{eq:charPartialDist}
    \chi_{\bomega}^{(\textsc{pd})}(k) \propto \per{S \odot V_{\bm s}},
\end{equation}
where $\odot$ is the Hadamard (element-wise) product defined as $(A \odot
B)_{ij} = (A)_{ij}(B)_{ij}$. Therefore, partial distinguishability can be
directly encoded into the characteristic function, which, plugged into
\cref{cor:OWF} yields an efficient algorithm for estimating linear
statistics in the presence of partial distinguishability.
In general, losses can easily be integrated by encompassing the $m-$mode network $U$ in a larger one with $2m$ modes \cite{seron_efficient_2024}, where the additional bottom $m$ modes correspond to lost photons and are traced out. This fits naturally in our formalism through an appropriate selection of $f$.

\section{Simulating photon-photon interactions via non-linear statistics}
As previously mentioned, certain binning strategies proposed in \cite{shi_quantum_2022} and in \cite{singh_proofofwork_2025} cannot directly be encompassed in our linear statistics framework and thus it is an open question whether they can be classically simulated efficiently. However, these strategies are motivated solely by cryptographic applications. As a first step towards understanding what types of coarse-graining of the Boson Sampling can be related to difficult physics-motivated problems, let us consider a simple proposal for \emph{nonlinear statistics} described via a quadratic
form as $f_{n.l.}(\s) = \sum_{j, k } J_{jk} s_j s_k $. For simplicity, let us consider the coefficients are integers and upper bounded by some positive value $C$. Clearly, in this scenario $f(\s) \leq C n^2$ and so a polynomial number of samples from the Boson Sampling device would be sufficient to estimate all the probabilities $P_k = \Pr_{\s \sim \Dc_U}[f_{n.l.}(\s) = k]$. By taking a discrete Fourier transform of these values, it is possible to obtain points of the characteristic function, which can be seen as transition amplitudes of a process with one layer of photon-photon interaction (Kerr and cross-Kerr interactions). Precisely, for this type of non-linear statistic, the analogue of \cref{eq:chiTransitionAmplitude} takes the following form 
    \begin{align}
        \chi_{\bomega}(k)
            & = \e[\s \sim \Dc_U]{\exp\!\lp\!-\frac{2\pi \imath k}{N} f_{n.l.}(\bm s)\rp } \label{eq:chiExpVal_non_linear}
            = \mel{\t}{\hat V_{\Omega_{int}}}{\t},
    \end{align}
    where 
    \begin{align}
       \hat{V}_{int} &= \hat{U}^{\dagger} \exp(-\frac{2\pi \imath k}{N} \hat{H}_{int})\hat{U}, \\
       \hat{H}_{int}&=  \sum_{j, k } J_{jk} \hat{n}_j \hat{n}_k.
    \end{align}
 In general, such transition amplitudes are $\#\text{P}$-hard to estimate to exponentially small additive error \cite{jabbour2025complexity} since they can be used to estimate individual outcome probabilities of Boson Sampling with exponential precision. However, to our knowledge there is currently no efficient classical algorithm to estimate them to inverse polynomial precision, which is the precision we would be able to get from polynomially many samples of the Boson Sampling device. State-of-the-art techniques to approximate amplitudes of processes involving Kerr and cross-Kerr interactions, based on expanding these non-linear gates in terms of linear combinations of phase-shifters, only allow for efficient additive error estimates when the number of such terms scales logarithmically in the number of modes \cite{upreti_exponentiallyimproved_2026}. In contrast, there are $O(m^2)$ non-linear gates in \cref{eq:chiExpVal_non_linear} and we leave open whether an efficient (Gurvits-like) additive error estimation algorithm is possible for these quantities. 
 
 More generally, as depicted in \cref{fig:physical-process}, the probability distribution obtained by coarse-graining the Boson Sampling outcome distribution via a non-linear function can be related (via discrete Fourier transform) to transition amplitudes of linear-interference processes which also contain a non-linear Hamiltonian evolution associated with an interacting Hamiltonian $H_{int}=f(\hat{n}_1, ..., \hat{n}_m)$. Hence, the question of classical simulability of non-linear
statistics is tightly related to transition-amplitude estimation
in circuits with a single layer of non-linear gates, as considered
in non-linear Boson Sampling (NLBS) \cite{spagnolo2023non}.
 
 For comparison, allowing polynomially many layers of cross-Kerr
interactions interleaved with linear optics enables universal
quantum computation \cite{lloyd1999quantum}.
Consequently, even estimating a single-mode marginal probability
to inverse-polynomial additive error is BQP-hard in this regime,
so an efficient classical algorithm would imply
$\mathrm{BPP}=\mathrm{BQP}$. It remains an open question how the complexity of additive 
estimation depends on the number of nonlinear layers, and whether
such hardness of classical simulations already arises for the single-layer scenario considered here.

    \begin{figure}
    \centering
    \includegraphics[width=\linewidth]{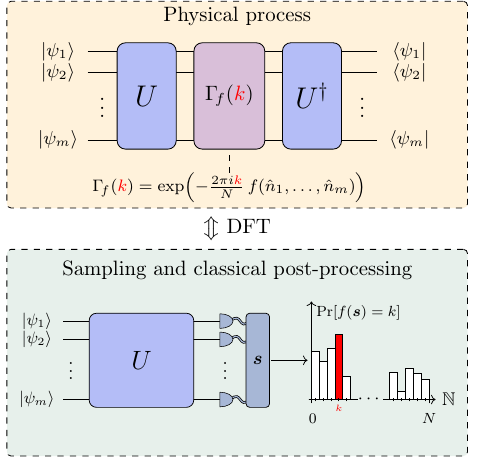}
    \caption{Coarse-grained output distributions of boson sampling $\Pr[f(\s) = k]$ after applying an arbitrary function $f(\s)$ to map the outcomes to integers (bottom). This probability distribution can be related to the transition amplitude associated to the physical process depicted on top -- one can be seen as the characteristic function of the other and so they are related by a discrete Fourier transform (DFT). \cref{thm:generatingFunction} formalizes this result for a linear function, but this results holds for arbitrary functions. For quadratic functions, $\Gamma_f(k)$ is given by a layer or Kerr gates. }
    \label{fig:physical-process}
\end{figure}

\section{Discussion}

In this work, we give a classical algorithm for efficiently estimating
\emph{linear statistics} of Boson Sampling, that is, the probability of obtaining a certain integer weighted
combination of the outcome's photon occupation. The methods work for different kinds of input states and allow for the classical simulation of molecular vibronic spectra, binned-mode distributions and certain proposals for one-way functions. Clearly, the method requires knowledge of the weight vector $\bm{\omega}$ to predict the linear statistics distribution. Certain cryptographic applications or validation tests using linear statistics require that the samples are sent before the weight vector is chosen and the coarse-grained distribution is analysed. It is unknown whether in these scenarios it would be possible to design an efficient mock-up classical sampler that would correctly mimic any linear statistics, for any random choice of weight vector. Since linear statistics encompasses mode-binning as a special case, it provides more stringent validation tests for Boson Sampling.

Our framework can also be extended to estimate probabilities of joint linear statistics, such as $\Pr[\bomega^{(1)}\cdot \s = k \land \bomega^{(2)}\cdot \s = j ]$ for different weight vectors $\bomega^{(1)}, \bomega^{(2)}$, or joint cumulative distributions involving
several integer-valued linear statistics.
For a fixed total photon number, the values of each linear function be shifted if
necessary and encoded in a sufficiently large integer base, reducing
simultaneous linear equalities to a single linear-statistic equality. Alternatively, a multidimensional characteristic function can be considered, similarly to the approach of \cite{seron_efficient_2024}. This preserves efficient additive-error estimation whenever the
logarithms of the individual range bounds are polynomial in the
number of modes.

Our work raises the question of finding other types of statistics that can be efficiently estimated with an
actual quantum device, but for which no known efficient classical algorithm
exists. Indeed, our methods apply directly when the outcomes are ordered via a linear function (e.g. by assigning an $n$-ary number) and the bins are chosen contiguously according to this ordering. More complex strategies to assign states to bins, such as shuffling the order via a permutation \cite{singh_proofofwork_2025} or treating collision-free outcomes differently from events with collisions \cite{shi_quantum_2022}, goes beyond our simulability regime. 

It would be interesting if non-linear statistics of Boson Sampling could find applications beyond cryptography and be used to solve more structured problems. In this direction, we show that the distributions obtained by coarse-graining the Boson Sampling data via a quadratic function can be used to estimate amplitudes of specific photonic quantum circuits which contain linear interferometers and one layer of photon-photon interactions, which escape current classical simulation techniques \cite{upreti_exponentiallyimproved_2026}.  We hope this connection instigates further investigations on the complexity of coarse-grained Boson Sampling distributions and their potential role in practical quantum advantage proposals.

\begin{acknowledgments}
The authors thank Andreas Buchleitner, Ulysse Chabaud, Saleh Rahimi-Keshari, Gabriel Dufour, Christoph Dittel, Tim Ehret, Pierre-Emmanuel Emeriau, Jelmer Renema, Stefan van den Hoven, Gyan Lamers, Innes Maxwell and Rawad
Mezher for fruitful discussions. H.T. acknowledges support from the European Commission as part of the EIC accelerator
program under the grant agreement 190188855 for SEPOQC project, the Horizon-CL4
program under the grant agreement 101135288 for EPIQUE project, and the CIFRE grant N$^o$ 2023/1746.
BS and LN acknowledge support from Horizon Europe project EPIQUE (Grant No. 101135288). B.S. acknowledges support from the Georg H Endress Foundation. E.A. acknowledges funding from FCT-
Fundação para a Ciência e a Tecnologia (Portugal). L.N. acknowledges funding from FCT-
Fundação para a Ciência e a Tecnologia (Portugal) via
the Project No. CEECINST/00062/2018 and from the European Union’s Horizon Europe Framework Programme (EIC Pathfinder Challenge
project Veriqub) under Grant Agreement No. 101114899.  
C.O. was supported by the National Research Foundation of Korea Grants (No. RS-2024-00431768 and No. RS-2025-00515456) funded by the Korean government (Ministry of Science and ICT (MSIT)) and the Institute of Information \& Communications Technology Planning \& Evaluation (IITP) Grants funded by the Korean government (MSIT) (No. RS-2024-00437284, No. IITP-2025-RS-2025-02283189 and No. IITP-2025-RS-2025-02263264). This work was supported by Global Partnership Program of Leading Universities in Quantum Science and Technology (RS-2025-08542968) through the National Research Foundation of Korea~(NRF) funded by the Korean government (Ministry of Science and ICT (MSIT)).
\end{acknowledgments}

\section*{Code availability}

The code is available at the following GitHub repository \bs{set this up}\cite{seron2024bosonsampling}.

\bibliography{bibliography}

@article{bradler2018gBS_perfect_matchings,
  title={Gaussian boson sampling for perfect matchings of arbitrary graphs},
  author={Br{\'a}dler, Kamil and Dallaire-Demers, Pierre-Luc and Rebentrost, Patrick and Su, Daiqin and Weedbrook, Christian},
  journal={Physical Review A},
  volume={98},
  number={3},
  pages={032310},
  year={2018},
  doi = {10.1103/PhysRevA.98.032310},
  publisher={APS}
}

@article{seron2024bosonsampling,
  title   = {{BosonSampling.jl}: A {Julia} package for quantum multi-photon interferometry},
  author  = {Seron, Benoit and Restivo, Antoine},
  journal = {Quantum},
  volume  = {8},
  pages   = {1378},
  year    = {2024},
  doi     = {10.22331/q-2024-06-18-1378}
}

@misc{monbroussou_classical_2026,
  title = {Classical Simulation and Model Concentration in Passive Linear Optics},
  author = {Monbroussou, L{\'e}o and Thomas, Hugo and Mhiri, Hela and Holmes, Zo{\"e} and Kashefi, Elham},
  year = 2026,
  month = jul,
  number = {arXiv:2607.24728},
  eprint = {2607.24728},
  publisher = {arXiv},
  doi = {10.48550/ARXIV.2607.24728},
  urldate = {2026-08-07},
  archiveprefix = {arXiv},
  copyright = {Creative Commons Attribution 4.0 International}
}

@article{jabbour2025complexity,
  title={Complexity of Gaussian quantum optics with a limited number of non-linearities},
  author={Jabbour, Michael G and Novo, Leonardo},
  journal={Quantum Science and Technology},
  volume={10},
  number={4},
  pages={045021},
  year={2025},
  doi = {10.1088/2058-9565/adf6d4},
  publisher={IOP Publishing}
}

@article{anguita_experimental_2025,
  title   = {Experimental validation of boson sampling using detector binning},
  author  = {Correa Anguita, Malaquias and Camillini, Anita and Marzban, Sara
             and Robbio, Marco and Seron, Benoit and Novo, Leonardo
             and Renema, Jelmer J.},
  journal = {Quantum Science and Technology},
  volume  = {10},
  number  = {3},
  pages   = {035062},
  year    = {2025},
  doi     = {10.1088/2058-9565/adecbc}
}

@article{drummond2022simulating,
  title={Simulating complex networks in phase space: Gaussian boson sampling},
  author={Drummond, Peter D and Opanchuk, Bogdan and Dellios, Alexander and Reid, Margaret D},
  journal={Physical Review A},
  volume={105},
  number={1},
  pages={012427},
  year={2022},
  publisher={APS},
  doi = {10.1103/PhysRevA.105.012427}
}

@article{hangleiter2023computational,
  title={Computational advantage of quantum random sampling},
  author={Hangleiter, Dominik and Eisert, Jens},
  journal={Reviews of Modern Physics},
  volume={95},
  number={3},
  pages={035001},
  year={2023},
  doi = {10.1103/RevModPhys.95.035001},
  publisher={APS}
}

@article{clifford_faster_2024,
  title = {Faster Classical Boson Sampling},
  author = {Clifford, Peter and Clifford, Rapha{\"e}l},
  year = 2024,
  month = jun,
  journal = {Physica Scripta},
  volume = {99},
  number = {6},
  pages = {065121},
  issn = {0031-8949, 1402-4896},
  doi = {10.1088/1402-4896/ad4688},
  urldate = {2026-06-04}
}

@article{salavrakos_photonnative_2025,
  title = {Photon-Native Quantum Algorithms},
  author = {Salavrakos, Alexia and Maring, Nicolas and Emeriau, Pierre-Emmanuel and Mansfield, Shane},
  year = 2025,
  month = apr,
  journal = {Materials for Quantum Technology},
  volume = {5},
  number = {2},
  pages = {023001},
  publisher = {IOP Publishing},
  issn = {2633-4356},
  doi = {10.1088/2633-4356/adc531},
  urldate = {2025-05-26},
  langid = {english}
}

@article{maring_versatile_2024,
  title = {A Versatile Single-Photon-Based Quantum Computing Platform},
  author = {Maring, Nicolas and Fyrillas, Andreas and Pont, Mathias and Ivanov, Edouard and Stepanov, Petr and Margaria, Nico and Hease, William and Pishchagin, Anton and Lema{\^i}tre, Aristide and Sagnes, Isabelle and Au, Thi Huong and Boissier, S{\'e}bastien and Bertasi, Eric and Baert, Aur{\'e}lien and Valdivia, Mario and Billard, Marie and Acar, Ozan and Brieussel, Alexandre and Mezher, Rawad and Wein, Stephen C. and Salavrakos, Alexia and Sinnott, Patrick and Fioretto, Dario A. and Emeriau, Pierre-Emmanuel and Belabas, Nadia and Mansfield, Shane and Senellart, Pascale and Senellart, Jean and Somaschi, Niccolo},
  year = 2024,
  month = jun,
  journal = {Nature Photonics},
  volume = {18},
  number = {6},
  pages = {603--609},
  publisher = {{Springer Science and Business Media LLC}},
  issn = {1749-4885, 1749-4893},
  doi = {10.1038/s41566-024-01403-4},
  urldate = {2025-08-05},
  copyright = {https://creativecommons.org/licenses/by/4.0},
  langid = {english}
}

@phdthesis{arkhipov_2017,
  title  = {Quantum computation with identical bosons},
  author = {Arkhipov, Alex (Aleksandr)},
  school = {Massachusetts Institute of Technology},
  year   = {2017},
  url    = {https://hdl.handle.net/1721.1/113995}
}

@article{nikolopoulos_cryptographic_2019,
  title = {Cryptographic One-Way Function Based on Boson Sampling},
  author = {Nikolopoulos, Georgios M.},
  year = 2019,
  month = jul,
  journal = {Quantum Information Processing},
  volume = {18},
  number = {8},
  pages = {259},
  issn = {1573-1332},
  doi = {10.1007/s11128-019-2372-9},
  urldate = {2026-02-24},
  langid = {english}
}

@article{nikolopoulos_decision_2016,
  title = {Decision and Function Problems Based on Boson Sampling},
  author = {Nikolopoulos, Georgios M. and Brougham, Thomas},
  year = 2016,
  month = jul,
  journal = {Physical Review A},
  volume = {94},
  number = {1},
  pages = {012315},
  publisher = {American Physical Society},
  doi = {10.1103/PhysRevA.94.012315},
  urldate = {2026-02-24}
}

@article{singh_proofofwork_2025,
  title = {Proof-of-Work Consensus by Quantum Sampling},
  author = {Singh, Deepesh and Muraleedharan, Gopikrishnan and Fu, Boxiang and Cheng, Chen-Mou and Roussy Newton, Nicolas and Rohde, Peter P and Brennen, Gavin K},
  year = 2025,
  month = feb,
  journal = {Quantum Science and Technology},
  volume = {10},
  number = {2},
  pages = {025020},
  publisher = {IOP Publishing},
  issn = {2058-9565},
  doi = {10.1088/2058-9565/adae2b},
  urldate = {2026-02-24},
  langid = {english}
}

@article{wang_experimental_2023,
  title = {Experimental {{Boson Sampling Enabling Cryptographic One-Way Function}}},
  author = {Wang, Xiao-Wei and Zhou, Wen-Hao and Fu, Yu-Xuan and Gao, Jun and Lu, Yong-Heng and Chang, Yi-Jun and Qiao, Lu-Feng and Ren, Ruo-Jing and Jiang, Ze-Kun and Jiao, Zhi-Qiang and Nikolopoulos, Georgios M. and Jin, Xian-Min},
  year = 2023,
  month = feb,
  journal = {Physical Review Letters},
  volume = {130},
  number = {6},
  pages = {060802},
  publisher = {American Physical Society},
  doi = {10.1103/PhysRevLett.130.060802},
  urldate = {2026-02-24}
}

@article{aaronson_generalizing_2014,
  title = {Generalizing and Derandomizing {{Gurvits}}'s Approximation Algorithm for the Permanent},
  author = {Aaronson, Scott and Hance, Travis},
  year = 2014,
  month = may,
  journal = {Quantum Info. Comput.},
  volume = {14},
  number = {7\&8},
  pages = {541--559},
  issn = {1533-7146},
  doi = {10.26421/QIC14.7-8-1}
}

@article{oh_quantuminspired_2024,
  title = {Quantum-Inspired Classical Algorithms for Molecular Vibronic Spectra},
  author = {Oh, Changhun and Lim, Youngrong and Wong, Yat and Fefferman, Bill and Jiang, Liang},
  year = 2024,
  month = feb,
  journal = {Nature Physics},
  volume = {20},
  number = {2},
  pages = {225--231},
  publisher = {Nature Publishing Group},
  issn = {1745-2481},
  doi = {10.1038/s41567-023-02308-9},
  urldate = {2026-02-18},
  copyright = {2024 The Author(s), under exclusive licence to Springer Nature Limited},
  langid = {english}
}

@article{seron_efficient_2024,
  title = {Efficient Validation of {{Boson Sampling}} from Binned Photon-Number Distributions},
  author = {Seron, Benoit and Novo, Leonardo and Arkhipov, Alex and Cerf, Nicolas J.},
  year = 2024,
  month = sep,
  journal = {Quantum},
  volume = {8},
  pages = {1479},
  publisher = {Verein zur Forderung des Open Access Publizierens in den Quantenwissenschaften},
  issn = {2521-327X},
  doi = {10.22331/q-2024-09-19-1479},
  urldate = {2025-08-04},
  copyright = {https://creativecommons.org/licenses/by/4.0/},
  langid = {english}
}

@inproceedings{aaronson_computational_2011,
  title = {The Computational Complexity of Linear Optics},
  booktitle = {Proceedings of the Forty-Third Annual {{ACM}} Symposium on {{Theory}} of Computing},
  author = {Aaronson, Scott and Arkhipov, Alex},
  year = 2011,
  month = jun,
  pages = {333--342},
  publisher = {ACM},
  address = {San Jose California USA},
  doi = {10.1145/1993636.1993682},
  urldate = {2024-07-18},
  isbn = {978-1-4503-0691-1},
  langid = {english}
}

@article{hoeffding_class_1948,
  title   = {Probability Inequalities for Sums of Bounded Random Variables},
  author  = {Hoeffding, Wassily},
  journal = {Journal of the American Statistical Association},
  volume  = {58},
  number  = {301},
  pages   = {13--30},
  year    = {1963},
  doi     = {10.1080/01621459.1963.10500830}
}

@article{bressanini_gaussian_2024,
  title   = {Binned-detector probability distributions for {Gaussian} boson sampling validation},
  author  = {Bressanini, Gabriele and Seron, Benoit and Novo, Leonardo
             and Cerf, Nicolas J. and Kim, M. S.},
  journal = {Physical Review A},
  volume  = {112},
  number  = {1},
  pages   = {012610},
  year    = {2025},
  doi     = {10.1103/jqvf-pm1r}
}

@article{hamilton_gaussian_2017,
  title   = {Gaussian Boson Sampling},
  author  = {Hamilton, Craig S. and Kruse, Regina and Sansoni, Linda and Barkhofen, Sonja and Silberhorn, Christine and Jex, Igor},
  journal = {Physical Review Letters},
  volume  = {119},
  number  = {17},
  pages   = {170501},
  year    = {2017},
  doi     = {10.1103/PhysRevLett.119.170501}
}

@article{huh_vibronic_2015,
  title   = {Boson Sampling for Molecular Vibronic Spectra},
  author  = {Huh, Joonsuk and Guerreschi, Gian Giacomo and Peropadre, Borja and McClean, Jarrod R. and Aspuru-Guzik, Al{\'a}n},
  journal = {Nature Photonics},
  volume  = {9},
  number  = {9},
  pages   = {615--620},
  year    = {2015},
  doi     = {10.1038/nphoton.2015.153}
}

@article{huh_yung_vibronic_2017,
  title   = {Vibronic Boson Sampling: Generalized Gaussian Boson Sampling for Molecular Vibronic Spectra at Finite Temperature},
  author  = {Huh, Joonsuk and Yung, Man-Hong},
  journal = {Scientific Reports},
  volume  = {7},
  pages   = {7462},
  year    = {2017},
  doi     = {10.1038/s41598-017-07770-z}
}

@article{arrazola_dense_2018,
  title   = {Using Gaussian Boson Sampling to Find Dense Subgraphs},
  author  = {Arrazola, Juan Miguel and Bromley, Thomas R.},
  journal = {Physical Review Letters},
  volume  = {121},
  number  = {3},
  pages   = {030503},
  year    = {2018},
  doi     = {10.1103/PhysRevLett.121.030503}
}

@article{broome_photonic_2013,
  title   = {Photonic Boson Sampling in a Tunable Circuit},
  author  = {Broome, Matthew A. and Fedrizzi, Alessandro and Rahimi-Keshari, Saleh and Dove, Jonathan and Aaronson, Scott and Ralph, Timothy C. and White, Andrew G.},
  journal = {Science},
  volume  = {339},
  number  = {6121},
  pages   = {794--798},
  year    = {2013},
  doi     = {10.1126/science.1231440}
}

@article{tillmann_experimental_2013,
  title   = {Experimental Boson Sampling},
  author  = {Tillmann, Max and Daki{\'c}, Borivoje and Heilmann, René and Nolte, Stefan and Szameit, Alexander and Walther, Philip},
  journal = {Nature Photonics},
  volume  = {7},
  number  = {7},
  pages   = {540--544},
  year    = {2013},
  doi     = {10.1038/nphoton.2013.102}
}

@article{spagnolo2023non,
  title={Non-linear boson sampling},
  author={Spagnolo, Nicol{\`o} and Brod, Daniel J and Galv{\~a}o, Ernesto F and Sciarrino, Fabio},
  journal={npj Quantum Information},
  volume={9},
  number={1},
  pages={3},
  year={2023},
  doi = {10.1038/s41534-023-00676-x},
  publisher={Nature Publishing Group UK London}
}

@article{bouland_complexitytheoretic_2023,
  title   = {Complexity-theoretic foundations of {BosonSampling} with a linear number of modes},
  author  = {Bouland, Adam and Brod, Daniel and Datta, Ishaun
             and Fefferman, Bill and Grier, Daniel
             and Hern{\'a}ndez, Felipe and Oszmaniec, Micha{\l}},
  journal = {Physical Review X},
  volume  = {16},
  number  = {2},
  pages   = {021059},
  year    = {2026},
  doi     = {10.1103/xc7b-sjm5}
}

@inproceedings{bouland_exponential_2025,
  title = {Exponential Improvements to the Average-Case Hardness of {{BosonSampling}}},
  booktitle = {2025 {{IEEE}} 66th {{Annual Symposium}} on {{Foundations}} of {{Computer Science}} ({{FOCS}})},
  author = {Bouland, Adam and Datta, Ishaun and Fefferman, Bill and Hern{\'a}ndez, Felipe},
  year = 2025,
  month = dec,
  pages = {912--933},
  publisher = {IEEE},
  address = {Sydney, Australia},
  doi = {10.1109/FOCS63196.2025.00047},
  urldate = {2026-05-06},
  copyright = {https://doi.org/10.15223/policy-029},
  isbn = {979-8-3315-7132-0}
}

@misc{mhiri_boson_2026,
  title         = {Boson sampling beyond the dilute regime: Second moments and anti-concentration},
  author        = {Mhiri, Hela and Thomas, Hugo and Monbroussou, L{\'e}o
                   and Chabaud, Ulysse and Holmes, Zo{\"e} and Kashefi, Elham},
  year          = {2026},
  eprint        = {2604.14323},
  archiveprefix = {arXiv},
  primaryclass  = {quant-ph},
  doi           = {10.48550/arXiv.2604.14323},
  url           = {https://arxiv.org/abs/2604.14323}
}

@misc{kolarovszki_general_2026,
  title         = {General framework for anticoncentration and linear cross-entropy benchmarking in photonic quantum advantage experiments},
  author        = {Kolarovszki, Zolt{\'a}n and Kaposi, {\'A}goston
                   and Zimbor{\'a}s, Zolt{\'a}n and Oszmaniec, Micha{\l}},
  year          = {2026},
  eprint        = {2604.15258},
  archiveprefix = {arXiv},
  primaryclass  = {quant-ph},
  doi           = {10.48550/arXiv.2604.15258},
  url           = {https://arxiv.org/abs/2604.15258}
}

@article{loredo_boson_2017,
  title = {Boson {{Sampling}} with {{Single-Photon Fock States}} from a {{Bright Solid-State Source}}},
  author = {Loredo, J. C. and Broome, M. A. and Hilaire, P. and Gazzano, O. and Sagnes, I. and Lemaitre, A. and Almeida, M. P. and Senellart, P. and White, A. G.},
  year = 2017,
  month = mar,
  journal = {Physical Review Letters},
  volume = {118},
  number = {13},
  pages = {130503},
  publisher = {American Physical Society},
  doi = {10.1103/PhysRevLett.118.130503},
  urldate = {2025-07-01}
}

@article{wang_highefficiency_2017,
  title = {High-Efficiency Multiphoton Boson Sampling},
  author = {Wang, Hui and He, Yu and Li, Yu-Huai and Su, Zu-En and Li, Bo and Huang, He-Liang and Ding, Xing and Chen, Ming-Cheng and Liu, Chang and Qin, Jian and Li, Jin-Peng and He, Yu-Ming and Schneider, Christian and Kamp, Martin and Peng, Cheng-Zhi and H{\"o}fling, Sven and Lu, Chao-Yang and Pan, Jian-Wei},
  year = 2017,
  month = jun,
  journal = {Nature Photonics},
  volume = {11},
  number = {6},
  pages = {361--365},
  publisher = {Nature Publishing Group},
  issn = {1749-4893},
  doi = {10.1038/nphoton.2017.63},
  urldate = {2025-07-01},
  copyright = {2017 Springer Nature Limited},
  langid = {english}
}

@article{hoch_reconfigurable_2022,
  title   = {Reconfigurable continuously-coupled {3D} photonic circuit for {Boson Sampling} experiments},
  author  = {Hoch, Francesco and Piacentini, Simone and Giordani, Taira
             and Tian, Zhen-Nan and Iuliano, Mariagrazia and Esposito, Chiara
             and Camillini, Anita and Carvacho, Gonzalo and Ceccarelli, Francesco
             and Spagnolo, Nicol{\`o} and Crespi, Andrea and Sciarrino, Fabio
             and Osellame, Roberto},
  journal = {npj Quantum Information},
  volume  = {8},
  number  = {1},
  pages   = {55},
  year    = {2022},
  doi     = {10.1038/s41534-022-00568-6}
}

@article{young_atomic_2024,
  title = {An Atomic Boson Sampler},
  author = {Young, Aaron W. and Geller, Shawn and Eckner, William J. and Schine, Nathan and Glancy, Scott and Knill, Emanuel and Kaufman, Adam M.},
  year = 2024,
  month = may,
  journal = {Nature},
  volume = {629},
  number = {8011},
  pages = {311--316},
  issn = {0028-0836, 1476-4687},
  doi = {10.1038/s41586-024-07304-4},
  urldate = {2026-03-31},
  langid = {english}
}

@article{wang_boson_2019a,
  title = {Boson {{Sampling}} with 20 {{Input Photons}} and a 60-{{Mode Interferometer}} in a 1 0 14 -{{Dimensional Hilbert Space}}},
  author = {Wang, Hui and Qin, Jian and Ding, Xing and Chen, Ming-Cheng and Chen, Si and You, Xiang and He, Yu-Ming and Jiang, Xiao and You, L. and Wang, Z. and Schneider, C. and Renema, Jelmer J. and H{\"o}fling, Sven and Lu, Chao-Yang and Pan, Jian-Wei},
  year = 2019,
  month = dec,
  journal = {Physical Review Letters},
  volume = {123},
  number = {25},
  pages = {250503},
  issn = {0031-9007, 1079-7114},
  doi = {10.1103/PhysRevLett.123.250503},
  urldate = {2026-05-06},
  langid = {english}
}

@article{mezher_solving_2023,
  title = {Solving Graph Problems with Single Photons and Linear Optics},
  author = {Mezher, Rawad and Carvalho, Ana Filipa and Mansfield, Shane},
  year = 2023,
  month = sep,
  journal = {Physical Review A},
  volume = {108},
  number = {3},
  pages = {032405},
  issn = {2469-9926, 2469-9934},
  doi = {10.1103/PhysRevA.108.032405},
  urldate = {2026-05-06},
  langid = {english}
}

@article{shi_quantum_2022,
  title = {A Quantum Hash Function with Grouped Coarse-Grained Boson Sampling},
  author = {Shi, Jinjing and Lu, Yuhu and Feng, Yanyan and Huang, Duan and Lou, Xiaoping and Li, Qin and Shi, Ronghua},
  year = 2022,
  month = feb,
  journal = {Quantum Information Processing},
  volume = {21},
  number = {2},
  pages = {73},
  issn = {1570-0755, 1573-1332},
  doi = {10.1007/s11128-022-03416-w},
  urldate = {2026-05-06},
  langid = {english}
}

@article{yin_experimental_2025,
  title   = {Experimental quantum-enhanced kernel-based machine learning on a photonic processor},
  author  = {Yin, Zhenghao and Agresti, Iris and {de Felice}, Giovanni
             and Brown, Douglas and Toumi, Alexis and Pentangelo, Ciro
             and Piacentini, Simone and Crespi, Andrea and Ceccarelli, Francesco
             and Osellame, Roberto and Coecke, Bob and Walther, Philip},
  journal = {Nature Photonics},
  volume  = {19},
  number  = {9},
  pages   = {1020--1027},
  year    = {2025},
  doi     = {10.1038/s41566-025-01682-5}
}

@article{agresti_demonstration_2025,
  title = {Demonstration of Hardware Efficient Photonic Variational Quantum Algorithm},
  author = {Agresti, Iris and Paul, Koushik and Schiansky, Peter and Steiner, Simon and Yin, Zhenghao and Pentangelo, Ciro and Piacentini, Simone and Crespi, Andrea and Ban, Yue and Ceccarelli, Francesco and Osellame, Roberto and Chen, Xi and Walther, Philip},
  year = 2025,
  month = oct,
  journal = {Physical Review Research},
  volume = {7},
  number = {4},
  pages = {043021},
  issn = {2643-1564},
  doi = {10.1103/d7bb-ybfh},
  urldate = {2026-05-06},
  langid = {english}
}

@article{deng_solving_2023,
  title = {Solving {{Graph Problems Using Gaussian Boson Sampling}}},
  author = {Deng, Yu-Hao and Gong, Si-Qiu and Gu, Yi-Chao and Zhang, Zhi-Jiong and Liu, Hua-Liang and Su, Hao and Tang, Hao-Yang and Xu, Jia-Min and Jia, Meng-Hao and Chen, Ming-Cheng and Zhong, Han-Sen and Wang, Hui and Yan, Jiarong and Hu, Yi and Huang, Jia and Zhang, Wei-Jun and Li, Hao and Jiang, Xiao and You, Lixing and Wang, Zhen and Li, Li and Liu, Nai-Le and Lu, Chao-Yang and Pan, Jian-Wei},
  year = 2023,
  month = may,
  journal = {Physical Review Letters},
  volume = {130},
  number = {19},
  pages = {190601},
  issn = {0031-9007, 1079-7114},
  doi = {10.1103/PhysRevLett.130.190601},
  urldate = {2026-05-06},
  langid = {english}
}

@article{lim_classical_2025,
  title         = {Classical algorithms for estimating expectation values in linear optical circuits},
  author        = {Lim, Youngrong and Oh, Changhun},
  journal       = {Physical Review Letters},
  year          = {2026},
  note          = {Accepted for publication on 13 August 2026},
  doi           = {10.1103/bl9r-2gyb},
  eprint        = {2502.12882},
  archiveprefix = {arXiv},
  primaryclass  = {quant-ph}
}

@article{oh_quantuminspired_2024a,
  title = {Quantum-{{Inspired Classical Algorithm}} for {{Graph Problems}} by {{Gaussian Boson Sampling}}},
  author = {Oh, Changhun and Fefferman, Bill and Jiang, Liang and Quesada, Nicol{\'a}s},
  year = 2024,
  month = may,
  journal = {PRX Quantum},
  volume = {5},
  number = {2},
  pages = {020341},
  issn = {2691-3399},
  doi = {10.1103/PRXQuantum.5.020341},
  urldate = {2026-05-06},
  langid = {english}
}

@misc{thomas_shedding_2025,
  title         = {Shedding light on classical shadows: Learning photonic quantum states},
  author        = {Thomas, Hugo and Chabaud, Ulysse and Emeriau, Pierre-Emmanuel},
  year          = {2025},
  eprint        = {2510.07240},
  archiveprefix = {arXiv},
  primaryclass  = {quant-ph},
  doi           = {10.48550/arXiv.2510.07240},
  url           = {https://arxiv.org/abs/2510.07240}
}

@article{oh_classical_2021,
  title = {Classical Simulation of Lossy Boson Sampling Using Matrix Product Operators},
  author = {Oh, Changhun and Noh, Kyungjoo and Fefferman, Bill and Jiang, Liang},
  year = 2021,
  month = aug,
  journal = {Physical Review A},
  volume = {104},
  number = {2},
  pages = {022407},
  issn = {2469-9926, 2469-9934},
  doi = {10.1103/PhysRevA.104.022407},
  urldate = {2026-05-05},
  langid = {english}
}

@article{qi_regimes_2020,
  title = {Regimes of {{Classical Simulability}} for {{Noisy Gaussian Boson Sampling}}},
  author = {Qi, Haoyu and Brod, Daniel J. and Quesada, Nicol{\'a}s and {Garc{\'i}a-Patr{\'o}n}, Ra{\'u}l},
  year = 2020,
  month = mar,
  journal = {Physical Review Letters},
  volume = {124},
  number = {10},
  pages = {100502},
  issn = {0031-9007, 1079-7114},
  doi = {10.1103/PhysRevLett.124.100502},
  urldate = {2026-05-06},
  langid = {english}
}

@article{oh_classical_2024,
  title = {Classical Algorithm for Simulating Experimental {{Gaussian}} Boson Sampling},
  author = {Oh, Changhun and Liu, Minzhao and Alexeev, Yuri and Fefferman, Bill and Jiang, Liang},
  year = 2024,
  month = sep,
  journal = {Nature Physics},
  volume = {20},
  number = {9},
  pages = {1461--1468},
  issn = {1745-2473, 1745-2481},
  doi = {10.1038/s41567-024-02535-8},
  urldate = {2026-05-06},
  langid = {english}
}

@article{madsen_quantum_2022,
  title = {Quantum Computational Advantage with a Programmable Photonic Processor},
  author = {Madsen, Lars S. and Laudenbach, Fabian and Askarani, Mohsen Falamarzi. and Rortais, Fabien and Vincent, Trevor and Bulmer, Jacob F. F. and Miatto, Filippo M. and Neuhaus, Leonhard and Helt, Lukas G. and Collins, Matthew J. and Lita, Adriana E. and Gerrits, Thomas and Nam, Sae Woo and Vaidya, Varun D. and Menotti, Matteo and Dhand, Ish and Vernon, Zachary and Quesada, Nicol{\'a}s and Lavoie, Jonathan},
  year = 2022,
  month = jun,
  journal = {Nature},
  volume = {606},
  number = {7912},
  pages = {75--81},
  issn = {0028-0836, 1476-4687},
  doi = {10.1038/s41586-022-04725-x},
  urldate = {2026-05-07},
  langid = {english}
}

@article{he_timebinencoded_2017,
  title = {Time-{{Bin-Encoded Boson Sampling}} with a {{Single-Photon Device}}},
  author = {He, Yu and Ding, X. and Su, Z.-E. and Huang, H.-L. and Qin, J. and Wang, C. and Unsleber, S. and Chen, C. and Wang, H. and He, Y.-M. and Wang, X.-L. and Zhang, W.-J. and Chen, S.-J. and Schneider, C. and Kamp, M. and You, L.-X. and Wang, Z. and H{\"o}fling, S. and Lu, Chao-Yang and Pan, Jian-Wei},
  year = 2017,
  month = may,
  journal = {Physical Review Letters},
  volume = {118},
  number = {19},
  pages = {190501},
  issn = {0031-9007, 1079-7114},
  doi = {10.1103/PhysRevLett.118.190501},
  urldate = {2026-05-07},
  copyright = {http://link.aps.org/licenses/aps-default-license},
  langid = {english}
}

@article{carosini_programmable_2024,
  title = {Programmable Multiphoton Quantum Interference in a Single Spatial Mode},
  author = {Carosini, Lorenzo and Oddi, Virginia and Giorgino, Francesco and Hansen, Lena M. and Seron, Benoit and Piacentini, Simone and Guggemos, Tobias and Agresti, Iris and Loredo, Juan C. and Walther, Philip},
  year = 2024,
  month = apr,
  journal = {Science Advances},
  volume = {10},
  number = {16},
  pages = {eadj0993},
  issn = {2375-2548},
  doi = {10.1126/sciadv.adj0993},
  urldate = {2026-05-07},
  langid = {english}
}

@article{zhong_quantum_2020,
  title = {Quantum Computational Advantage Using Photons},
  author = {Zhong, Han-Sen and Wang, Hui and Deng, Yu-Hao and Chen, Ming-Cheng and Peng, Li-Chao and Luo, Yi-Han and Qin, Jian and Wu, Dian and Ding, Xing and Hu, Yi and Hu, Peng and Yang, Xiao-Yan and Zhang, Wei-Jun and Li, Hao and Li, Yuxuan and Jiang, Xiao and Gan, Lin and Yang, Guangwen and You, Lixing and Wang, Zhen and Li, Li and Liu, Nai-Le and Lu, Chao-Yang and Pan, Jian-Wei},
  year = 2020,
  month = dec,
  journal = {Science},
  volume = {370},
  number = {6523},
  pages = {1460--1463},
  issn = {0036-8075, 1095-9203},
  doi = {10.1126/science.abe8770},
  urldate = {2026-05-07},
  langid = {english}
}

@article{liu_robust_2025,
  title   = {{Gaussian} boson sampling with 1,024 squeezed states in 8,176 modes},
  author  = {Liu, Hua-Liang and Su, Hao and Deng, Yu-Hao and Gong, Si-Qiu
             and Gu, Yi-Chao and Tang, Hao-Yang and Jia, Meng-Hao and Wei, Qian
             and Song, Yu-Kun and Wang, Dong-Zhou and Zheng, Ming-Yang
             and Chen, Fa-Xi and Li, Li-Bo and Ren, Si-Yu and Zhu, Xue-Zhi
             and Wang, Mei-Hong and Chen, Yao-Jian and Liu, Yan-Fei
             and Song, Long-Sheng and Yang, Peng-Yu and Chen, Jun-Shi
             and An, Hong and Zhang, Lei and Gan, Lin and Yang, Guang-wen
             and Xu, Jia-Min and He, Yu-Ming and Wang, Hui and Zhong, Han-Sen
             and Chen, Ming-Cheng and Jiang, Xiao and Li, Li and Liu, Nai-Le
             and Su, Xiao-Long and Zhang, Qiang and Lu, Chao-Yang
             and Pan, Jian-Wei},
  journal = {Nature},
  volume  = {653},
  number  = {8115},
  pages   = {687--692},
  year    = {2026},
  doi     = {10.1038/s41586-026-10523-6}
}

@article{ehrenberg_transition_2025,
  title = {Transition of {{Anticoncentration}} in {{Gaussian Boson Sampling}}},
  author = {Ehrenberg, Adam and Iosue, Joseph T. and Deshpande, Abhinav and Hangleiter, Dominik and Gorshkov, Alexey V.},
  year = 2025,
  month = apr,
  journal = {Physical Review Letters},
  volume = {134},
  number = {14},
  pages = {140601},
  issn = {0031-9007, 1079-7114},
  doi = {10.1103/PhysRevLett.134.140601},
  urldate = {2026-05-07},
  langid = {english}
}

@article{eickmann_bridging_2026,
  title   = {Bridging chemistry and {Gaussian} boson sampling: A photonic hierarchy of approximations for molecular vibronic spectra},
  author  = {Eickmann, Jan-Lucas and Luo, Kai-Hong and Roiz, Mikhail
             and Lammers, Jonas and Atzeni, Simone and Pandey, Cheeranjiv
             and L{\"u}tkewitte, Florian and Shirazi, Reza G. and Schlue, Fabian
             and Brecht, Benjamin and Rybkin, Vladimir V. and Stefszky, Michael
             and Silberhorn, Christine},
  journal = {npj Quantum Information},
  volume  = {12},
  pages   = {89},
  year    = {2026},
  doi     = {10.1038/s41534-026-01250-x}
}

@article{wang_efficient_2020,
  title = {Efficient {{Multiphoton Sampling}} of {{Molecular Vibronic Spectra}} on a {{Superconducting Bosonic Processor}}},
  author = {Wang, Christopher S. and Curtis, Jacob C. and Lester, Brian J. and Zhang, Yaxing and Gao, Yvonne Y. and Freeze, Jessica and Batista, Victor S. and Vaccaro, Patrick H. and Chuang, Isaac L. and Frunzio, Luigi and Jiang, Liang and Girvin, S. M. and Schoelkopf, Robert J.},
  year = 2020,
  month = jun,
  journal = {Physical Review X},
  volume = {10},
  number = {2},
  pages = {021060},
  publisher = {American Physical Society},
  doi = {10.1103/PhysRevX.10.021060},
  urldate = {2026-05-11}
}

@article{zhu_largescale_2024,
  title = {Large-Scale Photonic Network with Squeezed Vacuum States for Molecular Vibronic Spectroscopy},
  author = {Zhu, Hui Hui and Sen Chen, Hao and Chen, Tian and Li, Yuan and Luo, Shao Bo and Karim, Muhammad Faeyz and Luo, Xian Shu and Gao, Feng and Li, Qiang and Cai, Hong and Chin, Lip Ket and Kwek, Leong Chuan and Nord{\'e}n, Bengt and Zhang, Xiang Dong and Liu, Ai Qun},
  year = 2024,
  month = jul,
  journal = {Nature Communications},
  volume = {15},
  number = {1},
  pages = {6057},
  issn = {2041-1723},
  doi = {10.1038/s41467-024-50060-2},
  urldate = {2026-05-13},
  langid = {english}
}

@article{cardin_photonnumber_2024,
  title = {Photon-Number Moments and Cumulants of {{Gaussian}} States},
  author = {Cardin, Yanic and Quesada, Nicol{\'a}s},
  year = 2024,
  month = nov,
  journal = {Quantum},
  volume = {8},
  pages = {1521},
  issn = {2521-327X},
  doi = {10.22331/q-2024-11-13-1521},
  urldate = {2026-05-13},
  langid = {english}
}

@article{lloyd1999quantum,
  title={Quantum computation over continuous variables},
  author={Lloyd, Seth and Braunstein, Samuel L},
  journal={Physical Review Letters},
  volume={82},
  number={8},
  pages={1784},
  year={1999},
  publisher={APS}
}

@article{fitzke_simulating_2023,
  title = {Simulating the Photon Statistics of Multimode {{Gaussian}} States by Automatic Differentiation of Generating Functions},
  author = {Fitzke, Erik and Niederschuh, Florian and Walther, Thomas},
  year = 2023,
  month = feb,
  journal = {APL Photonics},
  volume = {8},
  number = {2},
  pages = {026106},
  issn = {2378-0967},
  doi = {10.1063/5.0129638},
  urldate = {2026-05-13},
  langid = {english}
}

@misc{upreti_exponentiallyimproved_2026,
  title         = {Exponentially-improved effective descriptions of physical bosonic systems},
  author        = {Upreti, Varun and Quesada, Nicol{\'a}s and Chabaud, Ulysse},
  year          = {2026},
  eprint        = {2604.18720},
  archiveprefix = {arXiv},
  primaryclass  = {quant-ph},
  doi           = {10.48550/arXiv.2604.18720},
  url           = {https://arxiv.org/abs/2604.18720}
}

\appendix
\onecolumngrid

\section{General linear statistics}
\label{sec:general-weight}

We now discuss how linear statistics as presented in \cref{def:linear-statistics} can be generalized to weight-vectors with possibly negative and rational entries.

\paragraph{Negative-valued weight-vectors}
Assume that $\bomega \in \mathbb Z^m$, that is, $\bomega$ is a vector of signed integers. Let ${\bomega' = \bomega - \min_i \omega_i}\mathbf{1}$, where the sum is to be understood component-wise, so that $\bomega'$ is a vector of nonnegative integers. Then
\begin{equation}\label{eq:positive-shift-omega}
    \bomega' \cdot \s = \bomega \cdot \s - n \min_i \omega_i,
\end{equation}
giving
\begin{equation}
    \Pr_{\s}[\bomega \cdot \s = j] = \Pr_{\s}[\bomega' \cdot \s = j -  n \min_i \bomega_i], 
\end{equation}
and $j -  n \min_i \omega_i \geq 0$. We can thus consider the characteristic function $\chi_{\bomega'}$ and work in this shifted, non-negative-valued space. The characteristic function now reads 
\begin{align}
    \chi_{\bomega'}(k)
        & = \e[\s \sim D_U]{\exp\!\left(- \frac{2\pi i k}{N'}\, \big((\omega_1 - \min_i \omega_i) s_1 + \dots + (\omega_m -  \min_i \omega_i)s_m\big)\right)},
\end{align}
where $N' > n\left(\max_i \omega_i -\min_i\omega_i\right)$ and is even.
Physically (see \cref{thm:generatingFunction}), it corresponds to the expectation value of a layer of phase shifters where the phase shifter on the $r$-th mode has phase $2\pi i k(\omega_r -  \min_i \omega_i)/N'$.

\paragraph{Rational-valued weight-vectors}
We can pursue the idea of shifting weight vectors to consider rational-valued weight-vectors. Assume the vector $\bomega$ is now rational-valued, that is, $\omega_i = p_i / q_i$ and 
\begin{equation}
    p_i \in \mathbb{Z}_{\geq 0} \qquad q_i \in \mathbb{Z}_{>0}.
\end{equation}
Let $t = \text{lcm}(q_1, \cdots, q_m)$. Similar to \cref{eq:positive-shift-omega}, consider the weight vector $\bomega' = t\bomega$
yielding the shifted, integer-valued probability distribution
\begin{equation}
    \Pr_{\s}[\bomega \cdot \s = j] = \Pr_{\s}[\bomega' \cdot \s = jt], 
\end{equation}
where $tj$ is necessarily an integer. Now, the characteristic function reads
\begin{equation}
    \chi_{\bomega'}(k)
        = \e[\s \sim D_U]{\exp\!\left(- \frac{2\pi i k}{N'}\, \big(t\omega_1 s_1 + \dots + t\omega_m s_m\big)\right)},
\end{equation}
and $N' = tN$. Recall that the time complexity of the classical simulation algorithm is polynomial in (among other) $\log N' = \log(tN) = \log t + \log N$. Hence, the overhead is only logarithmic in $t$.

\section{Proof of Theorem 1}
\label{sec:montecarlo}

We prove the characteristic-function identity, the Fourier
estimator, and the importance-sampling bound, then combine
them with approximate characteristic-function evaluations
to establish \cref{cor:OWF}. As a reminder, $\Phi_m^n$ denotes the occupation vectors of $n$
photons in $m$ modes, and $f_{\bomega} (\s)=\bomega \cdot \s$.
For the specified input configuration and interferometer $U$,
$D_U$ denotes the output distribution.

\subsection{Characteristic function}\label{sec:CFApprox}
\bsgen*
\begin{proof}
Let
\begin{equation}
\widehat\Omega
=
\sum_{j=1}^m\omega_j\widehat n_j.
\end{equation}
Since
\begin{equation}
e^{\mathrm{i}\theta\widehat\Omega}|\s\rangle
=
e^{\mathrm{i}\theta\bomega\cdot\s}|\s\rangle,
\end{equation}
we obtain
\begin{align}
\chi_{\bomega}(\theta)
&=
\sum_{\s}
|\langle\s|\widehat U|\t\rangle|^2
e^{\mathrm{i}\theta\bomega\cdot\s} \\
&=
\langle\t|
\widehat U^\dagger
e^{\mathrm{i}\theta\widehat\Omega}
\widehat U
|\t\rangle \\
&=
\langle\t|
\widehat V_{\Omega,\theta}
|\t\rangle.
\end{align}
\end{proof}

This characteristic function may be approximated, for instance, using the algorithm in the proof of Theorem~2 of \cite{lim_classical_2025}, which gives an algorithm for estimating transition amplitudes of product states in Boson Sampling that scale polynomially in $m$, $\varepsilon^{-1}$, and $\log(1/\delta)$. In the next proofs, we use the shorthand
$\chi_{\bomega}(k)=\chi_{\bomega}(-2\pi k/N)$
for $k\in\{0,\ldots,N-1\}$.

\subsection{Fourier estimator of the cumulative distribution}\label{sec:EstimateS}

\mcfourier*
\begin{proof}

Since $f_{\bomega}(\bm s)$ is integer-valued, the function
\begin{equation}\label{eq:characteristicFunction}
\chi_{\bomega}(k)
\;=\;
\e[\s \sim D_U]{\exp\!\left(- \frac{2\pi i k}{N}\, f_{\bomega}(\bm s) \right)},
\end{equation}
is the discrete Fourier transform of the probability mass function of the linear statistic modulo $N$. 
Indeed,
\begin{equation}
\label{eq:arkhipov-char}
\chi_{\bomega}(k) =
     \sum_{j=0}^{N-1} \Pr[\bm \omega \cdot \bm s = j \mod N]\,
\exp\!\left(-\frac{2\pi i k}{N} j\right),
\end{equation}
Because $0 < f_{\bomega}(\s) < N$, the individual probabilities
can be recovered by the inverse discrete Fourier transform:
\begin{equation}\label{eq:inverse-dft}
    \Pr[\bm \omega \cdot \bm s = j]
        = \frac{1}{N} \sum_{k=0}^{N-1} \chi_{\bomega}(k)\, \exp\!\left(\frac{2\pi i k}{N} j\right).
\end{equation}
Using this formula, the cumulative distribution admits the following Fourier representation:
\begin{align}
    S(x) 
        & = \sum_{j=0}^{x} \Pr[\bm \omega \cdot \bm s = j] \\
        & = \sum_{j=0}^{x} \frac{1}{N} \sum_{k=0}^{N-1} \chi_{\bomega}(k)\, \exp\!\left(\frac{2\pi i k}{N} j\right)\\
        & =  \frac{1}{N}\sum_{k=0}^{N-1} \chi_{\bomega}(k)\, G_N(k;x) \\ 
        & =  \sum_{k=0}^{N-1} q(k) \lp\frac{\chi_{\bomega}(k)\, G_N(k;x)}{N q(k)}\rp\\ 
        & = \e[k\sim q]{\frac{\chi_{\bomega}(k)\, G_N(k;x)}{N q(k)}}\label{eq:ZSx0},
\end{align}
where $q$ is a probability distribution with full support on $\{0, \ldots, N-1\}$ and 
\begin{equation}\label{eq:GN}
    G_N(k;x) = \sum_{j=0}^{x} \exp\!\left(\frac{2\pi i k}{N} j\right)
    = \begin{cases}
        x + 1, 
            & k = 0, \\[6pt]
        \displaystyle \frac{1 - \exp\!\left(\frac{2\pi i k}{N}(x+1)\right)} {1 - \exp\!\left(\frac{2\pi i k}{N}\right)},
            & k \neq 0.
\end{cases} 
\end{equation}
Provided $q(k)$ can be sampled from and
\begin{equation}\label{app:eq:boundedModulus}
    \left| \frac{\chi_{\bomega}(k)\, G_N(k;x)}{N q(k)} \right| = O(R),
\end{equation}
Since $S(x)$ is real, we apply Hoeffding's inequality to the
real part of the random variable in \cref{eq:ZSx0}, which has
expectation $S(x)$ and satisfies the same range bound. All
Monte Carlo averages below are understood to use this real part. Thus, Hoeffding's inequality \cite{hoeffding_class_1948} implies that $S(x)$ can be estimated up to additive error $\varepsilon$ with failure
probability $\delta$ using $\poly{\varepsilon^{-1}, R, \log\delta^{-1}}$ independent samples from
$q(k)$.
\end{proof}

\subsection{Importance sampling and proof of Theorem 1}\label{sec:thm1}

\cref{thm:estimateS} shows that the sample complexity of estimating $S(x)$ depends on the range $R$ of the estimator. We now introduce a distribution for which $R = O(\log N)$ and the expected sampling time is $\poly{\log N}$.

\importance*

\begin{proof}
We describe how to sample from $q$ with inverse-transform sampling and a
rejection step. First, we must choose which of the three cases of (\cref{eq:tk})
to sample from by sampling according to their respective masses. If the first case
is picked, return $k=0$. Otherwise, we sample from either of the two remaining
cases using inverse-transform sampling and a rejection step. If the last branch was selected, we use the change of variables $k'=N-k$.
Both nonzero branches can be sampled using the procedure below
on $\{1,\ldots,N/2\}$; for the first nonzero branch, reject and resample
whenever the generated index equals $N/2$. 

In order to sample an integer $k$ with probability proportional to $1/k$, we use
as a proxy the continuous distribution with PDF
\begin{equation}
    P_Y(k) \propto \int_{k}^{k+1} \frac{1}{x} dx = \log(k+1) - \log(k) = \log\left(1 + \frac{1}{k}\right),
\end{equation}
and round down to the nearest integer. Its CDF is 
\begin{equation}
    F_Y(k) = \frac{\log(k+1)}{\log(N/2 + 1)},
\end{equation}
thus, via inverse-transform sampling, we sample from this distribution by
sampling $v \sim \text{Unif[0, 1]}$ and outputting $k = \lfloor (N/2 +1)^v \rfloor $. Sampling from the continuous proxy generates integers $k$ with
probability proportional to $\log(1 + 1/k)$. To sample from our
target distribution, we perform a rejection step. To this end, we must find a
constant $\mathcal C$ such that 
\begin{equation}
    \frac{1}{k} \leq \mathcal C \log(1+1/k) \Leftrightarrow \frac{1}{\mathcal C} \leq k\log(1+1/k).
\end{equation}
Let $f(x) = x\log(1+1/x)$. As $\frac{d}{dx}f(x) = \log(1+1/x) - \frac{1}{x+1} >
0$ and $\lim_{x \to \infty} f(x) = 1$, its minimum over the integers is reached
at $x=1$, giving $\mathcal C = \frac{1}{\log 2}$. Therefore, a
candidate integer $k$ is accepted if a uniformly drawn $u \sim\text{Unif}[0, 1]$
satisfies 
\begin{equation}
    u \leq \frac{\log 2}{k \log (1+1/k)}.
\end{equation}
The acceptance probability converges quickly to $\log(2) \approx 0.7$, therefore
a constant number of attempts are required to sample from the true distribution.
Finally, we note that the normalization factor $\mathcal N$ satisfies 
\begin{align}
    \mathcal N 
        & =  1 + H_{\lfloor N/2\rfloor - 1} +  H_{\lfloor N/2\rfloor} \\
        &  = 1 + 2\gamma + \ln \left( \lfloor N/2\rfloor - 1 \right) + \ln\lfloor N/2\rfloor  + O(1/N),
\end{align}
where $H_n = \sum_{k = 1}^n \frac{1}{k}$ is the $n$-th harmonic number and
$\gamma \approx 0.577$ is the Euler-Mascheroni constant. Since each proposal can be generated 
and tested in time polynomial in $\log N$, a sample from $q$ can be generated in expected time $\poly{\log N}$.

It remains to prove the second claim of the lemma. For $k = 0$, using $G_N(0;x) = x+1$ and 
$q(0) = 1/\mathcal{N}$, we have 
\begin{equation}
    \frac{|G_N(0;x)|}{Nq(0)}
    =
    \frac{(x+1)\mathcal{N}}{N}
    \leq \mathcal{N}.
\end{equation}
For $1 \leq k \leq N-1$,
\begin{equation}
    |G_N(k,x)|
    = 
    \frac{\left|\sin\!\left(\pi k(x+1)/N\right)\right|
    }{
    \left|\sin\!\left(\pi k/N\right)\right|
    }.
\end{equation}
Using
\begin{equation}
    \left|\sin\!\left(\frac{\pi k}{N}\right)\right|
    \geq
    \frac{2\min\{k,N-k\}}{N},
\end{equation}
we obtain
\begin{equation}
    |G_N(k;x)|
    \leq
    \frac{N}{2\min\{k,N-k\}}.
\end{equation}
Since
\begin{equation}
    t(k) = \frac{1}{\min\{k,N-k\}},
\end{equation}
for $k \neq 0$, it follows that
\begin{equation}
   \frac{|G_N(k;x)|}{Nq(k)}
   =
   \frac{\mathcal N |G_N(k;x)|}{Nt(k)}
   \leq
   \frac{\mathcal{N}}{2}.
\end{equation}
Finally, since $\mathcal{N} = O(\log N)$, we have
\begin{equation}\label{eq:strongerbound}
    \left|\frac{G_N(k;x)}{Nq(k)}\right|
    = O(\log N).
\end{equation}
uniformly in $x$ and $k$. Therefore using $| \chi_{\bomega}(k)| \leq 1$, the 
condition in \cref{eq:boundedModulus} holds with $R = O(\log N)$.
\end{proof}

We now combine the preceding results to prove \cref{cor:OWF}, taking into account that the characteristic function
can only be evaluated approximately.

\owf*

\begin{proof}
    By \cref{thm:estimateS}, define
    \begin{equation}
        Z(k;x) 
        =
        \frac{\chi_{\bomega}(k)G_N(k;x)}{Nq(k)},
    \end{equation}
    so that
    \begin{equation}
        S(x) =  \e[k \sim q] {Z(k;x)}.
    \end{equation}
        For $M$ independent samples $k_1, \ldots, k_M \sim q$, let 
    \begin{equation}
        {\widehat S}_M(x)  
        =
        \frac{1}{M} \sum_{i = 1}^M Z(k_i, x).
    \end{equation}
    By \cref{lm:Q}, there exists a constant $C>0$ such that
    \begin{equation}
        |Z(k;x)| \leq C\log N,
    \end{equation}
    uniformly in $x$ and $k$. Hence, Hoeffding's inequality implies that
    \begin{equation}\label{eq:varepsilonMC}
        |\widehat S_M(x) - S(x) |\leq \varepsilon_{\mathrm{MC}},
    \end{equation}
    with a probability at least $1-\delta_{\mathrm{MC}}$, provided
    \begin{equation}\label{eq:sample-number}
        M 
        = 
        O\!\left(\frac{\log^2 N}{\varepsilon_{\mathrm{MC}}^2}
        \log\frac{1}{\delta_{\mathrm{MC}}}\right).
    \end{equation}
    The characteristic function $\chi_{\bomega}(k_i)$ is aproximated
    using the algorithm for estimating transition-amplitudes in \cite{lim_classical_2025}. Let $\widetilde\chi_{\bomega}(k_i)$ be an estimate satisfying
    \begin{equation}
        |\widetilde\chi_{\bomega}(k_i) - \chi_{\bomega}(k_i) |\leq \varepsilon_{\chi},
    \end{equation}
    with probability at least $1-\eta_\chi$ and define
    \begin{equation}
        \widetilde Z(k_i;x)
        =
        \frac{\widetilde\chi_{\bomega}(k_i)G_N(k_i;x)}{Nq(k_i)}.
    \end{equation}
    By \cref{eq:strongerbound},
    \begin{equation}
        \begin{aligned}
        \left| \widetilde Z(k_i,x)-Z(k_i;x)\right|
        =
        \left| \widetilde\chi_{\bomega}(k_i)
        -
        \chi_{\bomega}(k_i)
        \right|
        \left|
        \frac{G_N(k_i;x)}{Nq(k_i)}
        \right| \leq C\varepsilon_{\chi}\log N.
        \end{aligned}
    \end{equation}
    Therefore, choosing 
    \begin{equation}
        \varepsilon_{\chi} 
        = 
        \frac{\varepsilon}{2C\log N}
    \end{equation}
    ensures that
    \begin{equation}
        \left| \widetilde Z(k_i,x)-Z(k_i;x)\right|
        \leq \frac{\varepsilon}{2}.
    \end{equation}
    Let
    \begin{equation}
        \overline S_M(x)
        =
        \frac{1}{M} \sum_{i = 1}^M \widetilde Z(k_i, x).
    \end{equation}
    If all $M$ characteristic function estimates succeed, then
    \begin{equation}
        \left| \widehat S_M(x)-\overline S_M(x)\right|
        \leq
        \frac{1}{M} \sum_{i = 1}^M \left| Z(k_i, x) -\widetilde Z(k_i, x)\right| \leq
        \frac{\varepsilon}{2}.
    \end{equation}
    By the union bound, all  $M$ estimates satisfy the above inequality with probability at least $1-M\eta_\chi$. When this event and the Monte Carlo bound in \cref{eq:varepsilonMC} both hold, the triangle inequality
    gives
    \begin{equation}
        \begin{aligned}
         \left| S(x) - \overline S_M(x)\right|\leq
         \left| S(x)-\widehat S_M(x)\right|
         +
         \left| \widehat S_M(x)-\overline S_M(x)\right| \leq
         \varepsilon_{\mathrm{MC}}+\frac{\varepsilon}{2}.
        \end{aligned} 
    \end{equation}
    Taking $\varepsilon_{\mathrm{MC}} =\frac{\varepsilon}{2}$ and $M\eta_\chi = \delta_{\mathrm{MC}} = \frac{\delta}{2}$,
    we obtain
    \begin{equation}
        \Pr[\left| S(x) - \overline S_M(x)\right| \leq \varepsilon] \geq 1-\delta.
    \end{equation}
    The required precision for each characteristic function evaluation is
    \begin{equation}
        \varepsilon_\chi = \Theta\!\left(\frac{\varepsilon}{\log N}\right),
    \end{equation}
    and its failure probability is
    \begin{equation}
        \eta_\chi =\Theta\!\left(\frac{\delta}{M}\right).
    \end{equation}
   With the choices above, \cref{eq:sample-number}
   gives
   \begin{equation}\label{eq:SampComp}
        M
        =
        O\!\left(
        \frac{\log^2 N}{\varepsilon^2}
        \log\frac{1}{\delta}
        \right).
   \end{equation}
   By \cref{lm:Q}, each sample $k_i\sim
   q$ can be generated in 
   expected time $\poly{\log N}$. Moreover, the algorithm in the proof of Theorem~2 of \cite{lim_classical_2025} approximates each characteristic-function value 
   in time polynomial in $m$, $\varepsilon_\chi^{-1}$, and $\log(1/\eta_\chi)$. Hence, when we combine these running times with the sample complexity in \cref{eq:SampComp}, we get the total expected running time
   \begin{equation}
       T_{\text{total}} = O\left(\frac{m^2 \log^2 N}{\epsilon^4} \log(1/\delta)\log\left(\frac{\log^2 N}{\epsilon^2\delta}\log(1/\delta)\right)\right).
   \end{equation} This proves \cref{cor:OWF}. The nested Monte Carlo procedure leads to the $\varepsilon^{-4}$ dependence, while the additional $\log(M/\delta)$ factor comes from controlling the $M$ independent inner estimators. A joint estimator could remove the latter factor and may also improve the dependence on $\epsilon$. Moreover, whenever
    $\log N=\poly{m}$, the algorithm has expected polynomial
    running time in $m$, $\varepsilon^{-1}$, and
    $\log(1/\delta)$. 
\end{proof}

\section{Mode grouping for the validation of Boson Sampling}
\label{sec:proofModes}

In this section we describe how mode-grouping can be expressed in the framework
of linear statistics of Boson Sampling.
The mode-grouping strategy, as introduced in \cite{seron_efficient_2024}, is
defined as follows. The $m$ output modes are partitioned into $K$ disjoint
subsets $\Kc_1, \cdots, \Kc_K$, which, up to permutation, are contiguous. We
define the group-number operator as $\bm{\hat n}_{k} = \sum_{j \in \Kc_K} \hat
n_j$ and $\bm {\hat N} = (\bm{\hat n}_{1}, \cdots, \bm{\hat n}_{K})$. Using the
notation of \cref{eq:phimn}, group occupations are described by elements of
$\Phi_K^n$. For a mode-group occupation $\bm\ell \in \Phi_K^n$, indicating that
$\ell_i$ photons occupy the $i$-th group $\Kc_i$, we denote by
$\mathscr{O}_{\bm\ell}\subseteq \Phi_m^n$ the set of photon occupations whose
mode-grouping is $\bm \ell$, that is,
\begin{equation}
   \mathscr{O}_{\bm\ell} = \left\{\s \ | \ \s \in \Phi_m^n,\, \textstyle \sum_{j \in \Kc_i} s_j = \ell_i, \forall\ 1 \leq i \leq K\right\}.
\end{equation}

For a
non-negative integer $k$, the characteristic function associated with a
mode-grouping weight-vector $\bomega^{(\textsc{mg})}$ reads
\begin{align}
    \chi_{\bomega^{(\textsc{mg})}}(k) 
        & = \e[\bm{s} \sim D_U]{\exp(-\frac{2 \pi \imath k \bm \omega^{(\textsc{mg})} \cdot \bm s}{N})}\\
        & = \sum_{\bm s \in \Phi_m^n} \Pr[\bm s] \exp(-\frac{2 \pi \imath k \bm \omega^{(\textsc{mg})} \cdot \bm s}{N}) \\
        & = \sum_{\bm \ell \in \Phi_K^n} \exp(-\frac{2 \pi \imath k \bm \eta \cdot \bm \ell}{N}) \sum_{\bm s \in \mathscr{O}_{\bm \ell}} \Pr[\bm s] \\
        & = \sum_{\bm \ell \in \Phi_K^n} \Pr[\bm \ell] \exp(-\frac{2 \pi \imath k \bm \eta \cdot \bm \ell}{N}) \label{eq:ChiFourier}
\end{align}

where we defined

\begin{align}
        \bm\omega^{(\textsc{mg})} 
            & = (\underbrace{1, \cdots, 1}_{|\mathcal{K}_1|\ \text{times}}, \underbrace{n+1, \cdots, n+1}_{|\mathcal{K}_2|\ \text{times}}, \cdots, \underbrace{(n+1)^{K-1}, \cdots, (n+1)^{K-1}}_{|\mathcal{K}_K|\ \text{times}}) \label{eq:restrictedOmega},\\
        \bm \eta 
            & = (1, n+1, (n+1)^2, \cdots, (n+1)^{K-1}) \label{eq:defEta}.
\end{align}

The probability of observing a group-outcome $\bm \ell$ is obtained by taking
the inverse Fourier transform of the characteristic function as defined in
\cref{eq:ChiFourier}. This gives
\begin{equation}\label{eq:mode-binning-exp}
    \begin{aligned}
        \Pr_{\bm \ell}[\bm \eta \cdot \bm \ell = k]
        = \frac{1}{N} \sum_{j=0}^{N-1} \exp\!\left(\frac{2 \imath \pi j k}{N}\right)\chi_{\bomega^{(\textsc{mg})}}(j)= \e[j \sim \text{Unif}(0, \cdots, N-1)]{\exp\!\left(\frac{2 \imath \pi j k}{N}\right)\chi_{\bomega^{(\textsc{mg})}}(j)},
    \end{aligned}
\end{equation}
where we chose $q$ to be the uniform distribution. Let $\hat\chi_{\bomega^{(\textsc{mg})}}(j)$ be an estimate obtained using the algorithm in the proof of Theorem~2 of \cite{lim_classical_2025} satisfying  $|\hat\chi_{\bomega^{(\textsc{mg})}}(j)-\chi_{\bomega^{(\textsc{mg})}}(j)| \leq \varepsilon_{est}$ , then
\begin{equation}
    \left|\exp\!\left(\frac{2 \imath \pi j k}{N}\right)\hat\chi_{\bomega^{(\textsc{mg})}}(j)\right| \leq 1 + \varepsilon_{est},
\end{equation}
since it is a product of terms whose modulus is at most $1$ and an additive factor $\varepsilon_{est}$. Thus, Monte-Carlo sampling over \cref{eq:mode-binning-exp} with $M = O((1+\varepsilon_{est})^2/\varepsilon_{MC}^2)$ gives an estimate of the group probability $\Pr_{\bm
\ell}[\bm \eta \cdot \bm \ell = k]$ within additive error at most $\epsilon = \varepsilon_{MC} + \varepsilon_{est}$ with negligible failure probability. Each of the $M$ estimators takes time $O(m^2 /\varepsilon_{est}^2)$
with the classical algorithm of \cite{lim_classical_2025}, giving an overall time complexity of $O(m^2 /\epsilon^4)$.

Formally, we obtain \cref{cor:modeBinning} as
formulated in the main text.

\modebins*

\end{document}